\newtheorem{prop}{Proposition}
\newtheorem{cor}{Corollary}
\newtheorem{lm}{Lemma}
\newtheorem{thm}{Theorem}
\newcommand{\be}{\begin{eqnarray}}
\newcommand{\ee}{\end{eqnarray}}
\newcommand{\benn}{\begin{eqnarray*}}
\newcommand{\eenn}{\end{eqnarray*}}
\def\IR{\rm I \kern-0.20em R}
\newcommand{\bthm}{\begin{thm}}
\newcommand{\ethm}{\end{thm}}
\newcommand{\bcor}{\begin{cor}}
\newcommand{\ecor}{\end{cor}}
\newcommand{\bprop}{\begin{prop}}
\newcommand{\eprop}{\end{prop}}
\newcommand{\blm}{\begin{lm}}
\newcommand{\elm}{\end{lm}}
\newcommand{\beq}{\begin{equation}}
\newcommand{\eeq}{\end{equation}}
\newcommand{\ber}{\begin{eqnarray}}
\newcommand{\eer}{\end{eqnarray}}
\newcommand{\bproof}{\begin{proof}}
\newcommand{\eproof}{\end{proof}}
\newcommand{\bit}{\begin{itemize}}
\newcommand{\eit}{\end{itemize}}
\newcommand{\ben}{\begin{enumerate}}
\newcommand{\een}{\end{enumerate}}
\newcommand{\bdesc}{\begin{description}}
\newcommand{\edesc}{\end{description}}
\newcommand{\beqarrn}{\begin{eqnarray*}}
\newcommand{\eeqarrn}{\end{eqnarray*}}
\newcommand{\bproofof}{\begin{proofof}}
\newcommand{\eproofof}{\end{proofof}}
\newenvironment{rem}{\begin{trivlist}\item[]{\bf
Remark:}\hspace{4mm}}{\end{trivlist}}
\newcommand{\brem}{\begin{rem}}
\newcommand{\erem}{\end{rem}}
\newenvironment{rems}{\begin{trivlist}\item[]{\bf
Remarks}\begin{itemize}}{\end{itemize}\end{trivlist}}
\newcommand{\brems}{\begin{rems}}
\newcommand{\erems}{\end{rems}}
\newtheorem{fact}{Fact}
\newcommand{\bfact}{\begin{fact}}
\newcommand{\efact}{\end{fact}}
\newtheorem{examp}{Example}
\newcommand{\bexamp}{\begin{examp}\rm}
\newcommand{\eexamp}{\end{examp}}
\newtheorem{defn}{Definition}
\newcommand{\bdefn}{\begin{defn}\rm}
\newcommand{\edefn}{\end{defn}}
\newtheorem{alg}{Algorithm}
\newcommand{\balg}{\begin{alg}}
\newcommand{\ealg}{\end{alg}}
\newtheorem{prob}{Problem}
\newcommand{\bprob}{\begin{prob}}
\newcommand{\eprob}{\end{prob}}
\newcommand{\bvtm}{\begin{verbatim}}
\newcommand{\bfig}{\begin{figure}}
\newcommand{\efig}{\end{figure}}
\newcommand{\bcen}{\begin{center}}
\newcommand{\ecen}{\end{center}}
\long\def\comment#1{}
\def \n2{{N_0 \over 2}}
\def \h5{\hspace{0.5in}}
\def\IR{\mathbb R}
\newtheorem{theorem}{Theorem}
\title{Indoor 3-Dimensional Visible Light Positioning: Error Metric and LED Layout Optimization}
\author{Jiaojiao Xu, Nuo Huang, and Chen Gong
	
	\thanks{This work was supported in part by the National Natural Science Foundation of China under Grant 62171428 and Grant 62101526, in part by Key Program of National Natural Science Foundation of China under Grant 61631018, in part by Key Research Program of Frontier Sciences of CAS under Grant QYZDY-SSW-JSC003, and in part by the Fundamental Research Funds for the Central Universities under Grant KY2100000118.}
	
	\thanks{Jiaojiao Xu, Nuo Huang, and Chen Gong are with Key Laboratory of Wireless-Optical Communications, Chinese Academy of Sciences, School of Information Science and Technology, University of Science and Technology of China, Hefei, China. Email: xjj1224@mail.ustc.edu.cn, \{huangnuo, cgong821\}@ustc.edu.cn.}}
\begin{document}
\maketitle{}

\begin{abstract}
	We consider $3$-dimensional (3D) visible light positioning (VLP) based on smartphone camera in an indoor scenario. Based on the positioning model in the quantized pixel-domain, we characterize the 3D normalized positioning error metric (NPEM) through the partial derivative of the positioning function, and evaluate the NPEM for horizontal and non-horizontal receiver camera positions. Moreover, under horizontal receiver terminal position, we explore the relationship between the NPEM and the light-emitting diode (LED) cell layout, approximate the relationship between the NPEM and the number of LEDs captured by the camera, and evaluate the approximation accuracy according to the simulated positioning error. Based on the approximation results, we optimize the LED transmitter cell layout to minimize NPEM assuming structured square cell layouts with certain distance parameters. 
\end{abstract}
{{\bf \textit{Index Terms} —Visible light positioning (VLP), positioning metric, LED cell layout.}}

\section{Introduction} \label{sec.Introduction}
Indoor positioning system (IPS) has attracted extensive attention due to its wide range of applications, for example, the positioning in museums and shopping malls. Up till now, plenty of IPSs have been proposed based on various technologies including Bluetooth~\cite{nieminen2014networking}, WiFi~\cite{yang2015wifi}, Radio Frequency Identification (RFID)~\cite{zou2014platform}, fingerprinting~\cite{kaemarungsi2005efficient}, Ultra-Wide Band (UWB)~\cite{zhang2010real}, and Infra-Red~\cite{hauschildt2010advances}. Bluetooth and WiFi-based IPSs generally suffer low positioning accuracy due to multi-path effects. RFID-based and fingerprinting-based IPSs perform positioning by matching the received signal with the information stored in the database, which requires a large data base and may lead to low position accuracy under small change of electromagnetic propagation environment. The above issue can be avoided by visible light positioning (VLP). In existing VLP system, one solution is to adopt a photodetector (PD) at the receiver~\cite{9011751,9080585,8486755,9008493,7339418}, which can estimate the transmission distance by detecting and analyzing the light properties such as received signal strength (RSS)~\cite{yasir2014indoor,9115244,7801028}, time of arrival (TOA) or time difference of arrival (TDOA)~\cite{do2014tdoa}, and phase of arrival (POA) or phase difference of arrival (PDOA)~\cite{panta2012indoor}. However, the RSS-based technique relies on the condition that the LED's transmitted power is accurately known and does not change over time, while both TOA/TDOA and POA/PDOA-based techniques require extremely accurate time/phase measurements~\cite{panta2012indoor}. In addition, it has been found that PDs are susceptible to the light beam direction, which may significantly limit the user mobility~\cite{zhao2016theoretical}. Thus, a promising alternative solution is to utilize an image sensor (IS)~\cite{wei2015high,liu2014towards,kuo2014luxapose,nakazawa2013indoor,nakazawa2014led}. Work~\cite{wei2015high} demonstrated a centimeter-level VLP system via simulations, where both PD and camera are employed. Work~\cite{nakazawa2013indoor} employed a camera equipped with a fish-eye lens to capture more LEDs and improve the positioning accuracy. 

In recent years, various high-precision methods and various positioning systems using smartphone camera as the receiver have been proposed~\cite{8125567,8417727,9397404,9322127,Zhu2019,9557845}. A detailed overview on the positioning approaches and the corresponding accuracy were presented in references~\cite{luo2017indoor} and~\cite{8292854}. In this work, we are no longer committed to a specific positioning approach, but pay more attention to the fundamental factors related to the positioning error. We consider a three-dimensional (3D) VLP system based on smartphone camera in an indoor scene, and utilize the transformations of different coordinate systems~\cite{9145234} to estimate the user position at any rotation angle. Then, we analyze the 3D normalized positioning error metric (NPEM), characterize the NPEM through the partial derivative of positioning function, and investigate the NPEM under horizontal and non-horizontal terminal positions. Moreover, we explore and approximate the relationship between the NPEM and the number of captured LEDs under parallel transmitter plane and receiver plane, and demonstrate the accuracy of the approximate relationship according to the simulated results. Finally, we propose a general form of LED cell layout optimization problem and optimize the LED cell layout parameters under square layout assumption to minimize the NPEM.

The remainder of this paper is organized as follows. Section~\ref{sec.2} introduces the translation and rotation model in camera imaging and 3D VLP. Section~\ref{sec.3} analyzes the NPEM and corresponding simulation results. Section~\ref{sec.4} explores the relationship between the NPEM and LED cell layout for parallel transmitter plane and receiver plane, evaluates the accuracy of approximated results according to the simulated positioning error, and explore the NPEM in infinite LED cell layout space under parallel and non-parallel transmitter plane and receiver plane. Section~\ref{sec.5} optimizes the LED transmitter layout to minimize the NPEM assuming structured square LED cell layout and provides the possible LED layout schemes. Finally, conclusion is made in Section~\ref{sec.Conclusions}. 

\section{Translation and rotation model For 3D VLP}\label{sec.2}
\subsection{VLP System with Four Coordinate Systems}\label{VLP System}
We consider a 3D VLP system, where multiple LEDs are adopted as anchor points with known positions, and a camera is adopted as the terminal to be positioned. The LEDs in the 3D space are displayed on the 2-dimensional (2D) pixel array of the terminal, which is utilized to estimate the camera center position. Such VLP system explores the geometric relations between the 3D LED positions in the real world and the 2D projection positions on the image plane. 

We consider $4$ coordinate systems for the camera-based VLP system, namely world coordinate system (WCS), camera coordinate system (CCS), image coordinate system (ICS), and pixel coordinate system (PCS)~\cite{9397404,8801944,9540364}. The dimensions of WCS and CCS are three, while the dimensions of ICS and PCS are two.

In the WCS, the camera center position to be estimated is denoted as $(U, V, W)$. In the CCS, such position is denoted as $(X, Y, Z)$, where the origin is located at the camera center; $x$ and $y$ axes are parallel to the phase plane; and $z$ axis is the lens optical axis. The imaging plane is located from the camera center with focal length $f$. In the ICS, the position in the CCS is projected onto the imaging plane, denoted as $(x^I, y^I)$. In the PCS, the image in the ICS is quantized into integer or half-integer pixels in the camera's CCD/CMOS chip, denoted as $(u^p, v^p)$. 

\subsubsection{The Coordinate Rotation} 
Assume that the rotation matrices are all calculated in the right-handed coordinate system, where counter-clockwise is the positive direction. Assume rotation angles $\theta_x, \theta_y$ and $\theta_z$ in the counter-clockwise manner around $x, y$ and $z$ axes from WCS to CCS, respectively. The rotation matrices around the three axes can be expressed as
\be\label{equ.Det001}
\begin{aligned} 
&{\bf R}_x(\theta_x)=
\begin{bmatrix} 
	1 & 0  & 0\\ 
	0 & \cos\theta_x  &-\sin\theta_x \\
	0 & \sin\theta_x  & \cos\theta_x 
\end{bmatrix},\ \\
&{\bf R}_{y}(\theta_y)=
\begin{bmatrix}
	\cos\theta_y  & 0  & \sin\theta_y\\
	0 & 1 & 0 \\ 
	-\sin\theta_y & 0  & \cos\theta_y 
\end{bmatrix}, \\
&{\bf R}_{z}(\theta_z)=
\begin{bmatrix} 
	\cos\theta_z &  -\sin\theta_z  & 0\\ 
	\sin\theta_z &  \cos\theta_z   & 0\\ 
	0 & 0 & 1 
\end{bmatrix}.
\end{aligned} 
\ee
The 3D rotation matrix ${\bf R}(\theta_x,\theta_y,\theta_z)$ is obtained by multiplying the above three matrices to the left according to the rotation order, given by
\be\label{equ.Det002}
\begin{aligned} 
{\bf R}\left( \theta _x,\theta _y,\theta _z \right) &={\bf R}_x\left( \theta _x \right) {\bf R}_{y}\left( \theta _y \right) {\bf R}_{z}\left( \theta _z \right)\\
   & \triangleq \begin{bmatrix} 
	R_{11} & R_{12} & R_{13}\\ 
	R_{21} & R_{22} & R_{23}\\ 
	R_{31} & R_{32} & R_{33} 
\end{bmatrix}. 
\end{aligned} 
\ee

\subsubsection{Translation and Rotation Model}
Consider the transformation from WCS to CCS, as shown in Figure~\ref{WCS2CCS}. Such transformation from WCS to CCS can be expressed as Equation~(\ref{equ.Det0000}),
\be\label{equ.Det0000}
\begin{aligned}
&\left[X\ Y\ Z\right]^T =\\
&\left[ \begin{matrix}
	R_{11}&	R_{12}&	R_{13}&		0\\
	R_{21}&	R_{22}&	R_{23}&		0\\
	R_{31}&	R_{32}&	R_{33}&		0\\
\end{matrix} \right] \left[ \begin{matrix}
	1&		0&		0&		-x^w_c  \\
	0&		1&		0&		-y^w_c\\
	0&		0&		1&		-z^w_c\\
	0&		0&		0&		1\\
\end{matrix} \right] \left[ \begin{array}{c}
	U\\
	V\\
	W\\
	1\\
\end{array} \right],
\end{aligned}
\ee
where $(x^w_c, y^w_c, z^w_c )$ is the origin coordinate of CCS in the WCS to be estimated. It is seen from Equation~(\ref{equ.Det0000}) that $[R_{11},R_{21},R_{31}]^T,\ [R_{12},R_{22},R_{32}]^T$ and $[R_{13},R_{23},R_{33}]^T$ are the coordinates of points $(1,0,0),\ (0,1,0)$ and $(0,0,1)$ of the WCS in the CCS, respectively.
\begin{figure}[htbp]
	\centering
	\includegraphics[width=8.5cm, height=4cm]{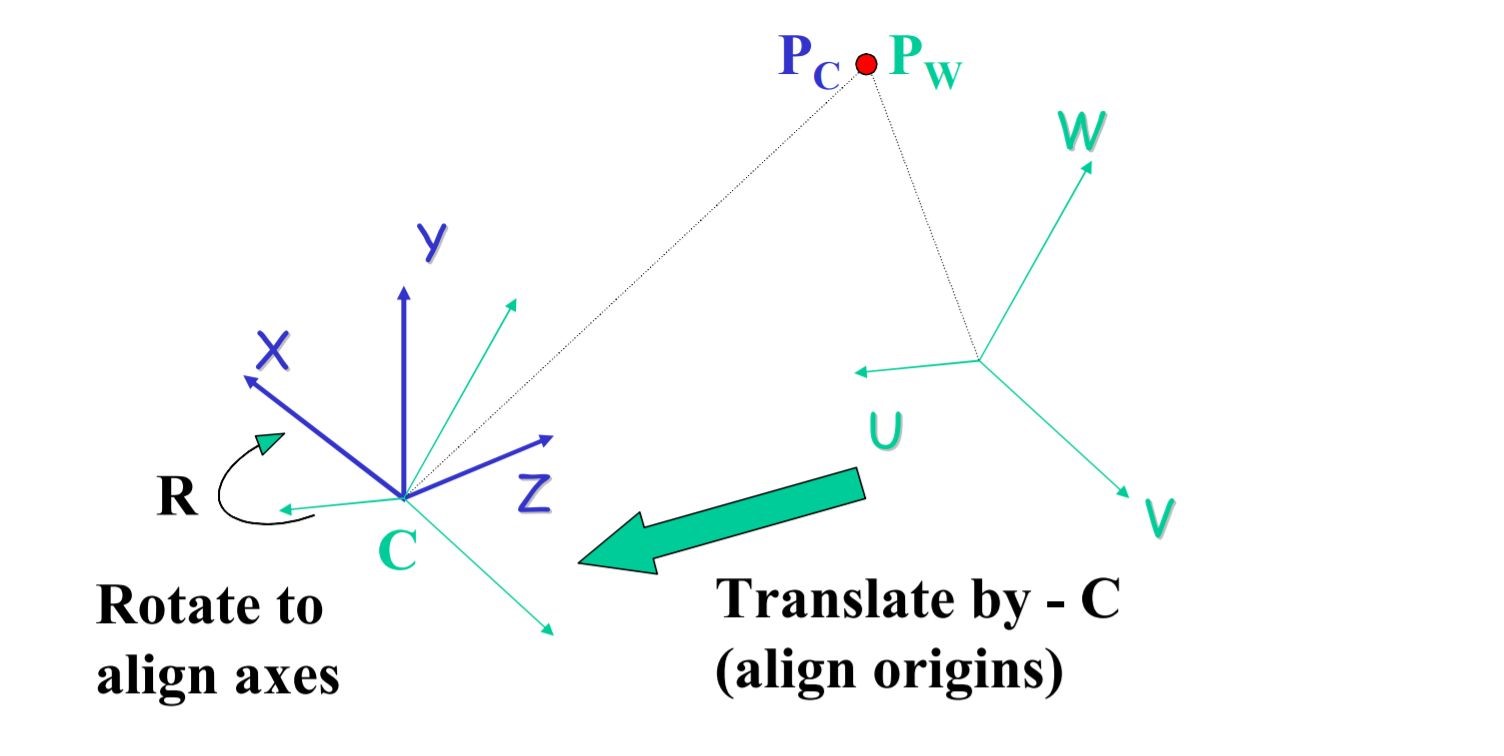}
	\caption{The transform from WCS to CCS.}\label{WCS2CCS}
\end{figure}

Figure~\ref{CCS2ICS} shows the transformation from position $(X, Y, Z)$ in the CCS to position $(x^I, y^I)$ in the ICS, derived as follows,
\begin{figure}[htbp]
	\centering
	\includegraphics[width=8.8cm, height=4cm]{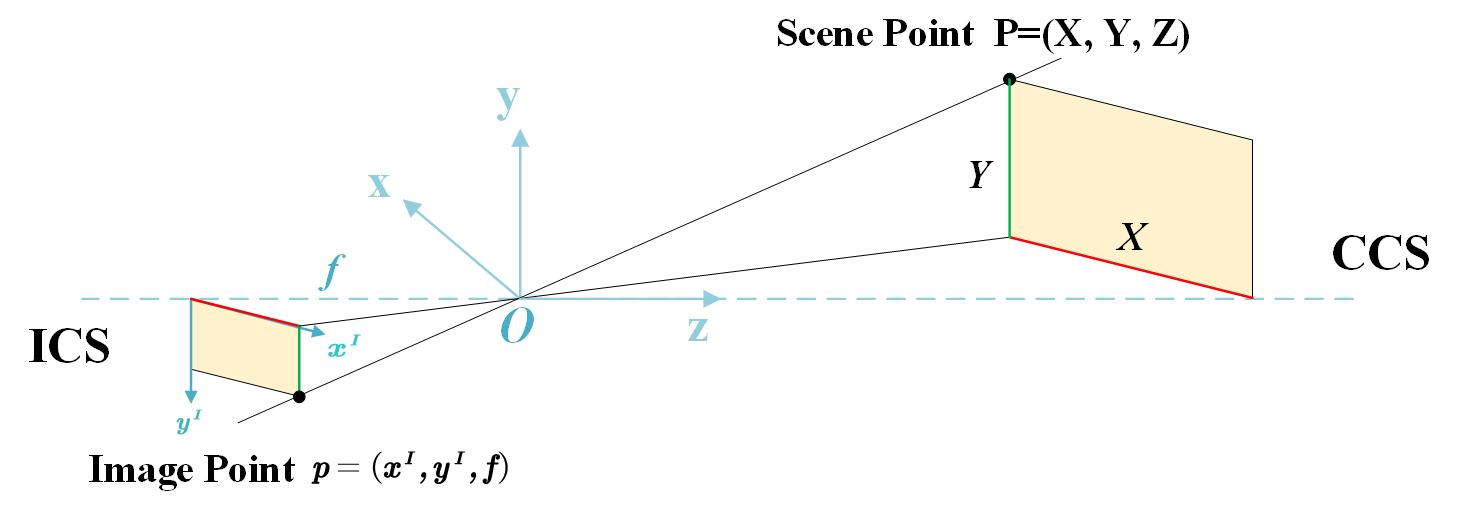}
	\caption{The projection from CCS to ICS.}\label{CCS2ICS}
\end{figure}
\be\label{equ.Det0001}
\begin{aligned}
	x^I=f\ \frac{X}{Z}, \ y^I=f\ \frac{Y}{Z}.
\end{aligned}
\ee

The transformation from ICS to PCS involves translation and scaling, as shown in Figure~\ref{ICS2PCS}. The ICS origin is the intersection of the camera's optical axis and the film plane, located at the image center point. The $x-$axis and $y-$axis are parallel to the $u-$axis and $v-$axis, respectively; and $(u_0,v_0)$ is the coordinate of the ICS origin in the PCS. Let $s_{x}$ and $s_{y}$ be the length per pixel in the $x-$axis and $y-$axis, respectively, with unit mm/pixel. Then, the quantization from ICS to PCS can be written as 
\begin{figure}[htbp]
	\centering
	\subfigure[Same coordinate dirction axis of xy-system and uv-system.]{
		\includegraphics[width=6cm, height=3.5cm]{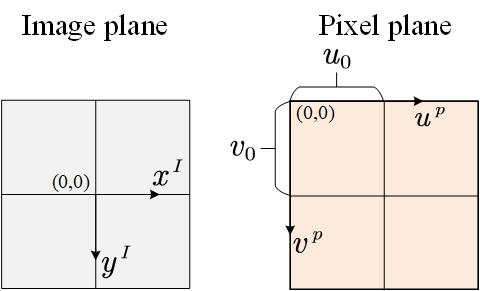}
	}\vspace{1ex}
	\subfigure[Opposite coordinate dirction axis of xy-system and uv-system.]{
		\includegraphics[width=6cm, height=3.5cm]{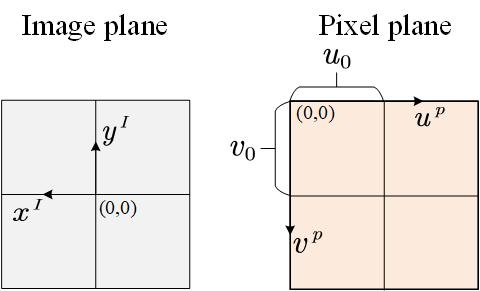}
	}
	\caption{The transformation from ICS to PCS.}\label{ICS2PCS}
\end{figure}
\be\label{equ.Det0002}
\begin{aligned}
	&\textbf{Q}[u^p]=\textbf{Q}[\frac{1}{s_x}x^I]+u_0=\textbf{Q}[\frac{1}{s_x}f\frac{X}{Z}]+u_0,  \\
	&\textbf{Q}[v^p]=\textbf{Q}[\frac{1}{s_y}y^I]+v_0=\textbf{Q}[\frac{1}{s_y}f\frac{Y}{Z}]+v_0,
\end{aligned}
\ee
where \textbf{Q}[$\cdot$] represents the pixel quantization operation to integer pixel values. Such operation introduces pixel quantization error and further leads to positioning error.

\subsection{3D VLP using Camera}
Let $L_{i}^{w}, L_{i}^{c}, L_{i}^{I}$ and $L_{i}^{p}$ denote the coordinates of LED $i$ in WCS, CCS, ICS and PCS, respectively ($i=1,2,3,\cdots$). Similarly, let $C^w, C^c, C^I$ and $C^p$ denote the corresponding coordinates of camera center in the four coordinate systems. The above coordinates and centers can be summarized as follows, 
\be\label{equ.Det003} 
\begin{aligned}       
	&\mathbf{WCS}: L_{i}^{w}=\left( x_{i}^{w},y_{i}^{w},z_{i}^{w}=h \right) ,\,C^w=\left( x_{\text{c}}^{w},y_{\text{c}}^{w},z_{\text{c}}^{w} \right);\\
	&\mathbf{CCS}: L_{i}^{c}=\left( x_{i}^{c},y_{i}^{c},z_{i}^{c} \right),\,C^c=\left( 0,0,0 \right);\\
	&\mathbf{ICS}: L_{i}^{I}=\left( x_i^I, y_i^I \right) ,\,C^I=\left( 0,0 \right);\\
	&\mathbf{PCS}: L_{i}^{p}=\left( u_{i}^{p},v_{i}^{p} \right) ,\,C^p=(u_0,v_0).   
\end{aligned}
\ee

The transformation from WCS to CCS is given by Equation~(\ref{equ.Det0000}); and the transformation from CCS to ICS is given by Equation~(\ref{equ.Det0001}). Based on the pixel coordinates in the PCS plane, the transformation from ICS to PCS is given by Equation~(\ref{equ.Det0002}).

By using Equation~(\ref{equ.Det002}) and substituting $x_{i}^{c}, y_{i}^{c}, z_{i}^{c}$ in the transformation from WCS to CCS, the above coordinate system transformations can be expressed as Equation~(\ref{equ.Det02}),
\begin{figure*}
\be\label{equ.Det02}
\begin{aligned}
	\left( u_{i}^{p}-u_0 \right)s_{x}+f \frac{R_{11}\left( x_{i}^{w}-x_{c}^{w} \right) +R_{12}\left( y_{i}^{w}-y_{c}^{w} \right) +R_{13}\left( z_{i}^{w}-z_{c}^{w} \right)}{R_{31}\left( x_{i}^{w}-x_{c}^{w} \right) +R_{32}\left( y_{i}^{w}-y_{c}^{w} \right) +R_{33}\left( z_{i}^{w}-z_{c}^{w} \right)}=0, \\
	\left( v_{i}^{p}-v_0 \right)s_{y}+f\frac{R_{21}\left( x_{i}^{w}-x_{c}^{w} \right) +R_{22}\left( y_{i}^{w}-y_{c}^{w} \right) +R_{23}\left( z_{i}^{w}-z_{c}^{w} \right)}{R_{31}\left( x_{i}^{w}-x_{c}^{w} \right) +R_{32}\left( y_{i}^{w}-y_{c}^{w} \right) +R_{33}\left( z_{i}^{w}-z_{c}^{w} \right)}=0,
\end{aligned}
\ee
\end{figure*}
where $(x_{c}^{w}, y_{c}^{w}, z_{c}^{w})$ is the camera center to be estimated. Assuming known LED coordinates $(x_{i}^{w}, y_{i}^{w}, z_{i}^{w})$ and pixel coordinates $(u_{i}^{p}, v_{i}^{p})$ for $i\ge 3$, the camera center position $(x^w_c, y^w_c, z^w_c)$ can be estimated along with rotation angles $\theta_x, \theta_y, \theta_z$. 

\section{Positioning error analysis and error metric simulation}\label{sec.3}
\subsection{Normalized Positioning Error Metric}
From Equation~(\ref{equ.Det02}), we have
\be\label{equ.Det901}
\begin{aligned}
&F_{x, i}(x_{\text{c}}^{w},y_{\text{c}}^{w},z_{\text{c}}^{w},\theta_x,\theta_y,\theta_z) \triangleq( u_{i}^{p}-u_0)s_x A_i+f B_i=0,\\  \vspace{-1ex}
&F_{y, i}(x_{\text{c}}^{w},y_{\text{c}}^{w},z_{\text{c}}^{w},\theta_x,\theta_y,\theta_z) \triangleq( v_{i}^{p}-v_0)s_y A_i+f C_i=0,
\end{aligned}
\ee
where 
$$A_i=R_{31}(x_{i}^{w}-x_{c}^{w}) +R_{32}(y_{i}^{w}-y_{c}^{w}) +R_{33}( z_{i}^{w}-z_{c}^{w}),$$ \vspace{-3ex} $$B_i=R_{11}(x_{i}^{w}-x_{c}^{w}) +R_{12}(y_{i}^{w}-y_{c}^{w}) +R_{13}(z_{i}^{w}-z_{c}^{w}),$$ \vspace{-3ex}
$$C_i=R_{21}(x_{i}^{w}-x_{c}^{w}) +R_{22}(y_{i}^{w}-y_{c}^{w}) +R_{23}(z_{i}^{w}-z_{c}^{w}).$$ 

To characterize the effect of pixel-domain quantization on the receiver positioning estimation, we first investigate its inverse, i.e., how the pixel-domain projection varies with the receiver position. Firstly, the partial derivative of $F_{x, i}$ with respect to $x_{c}^{w}$ is given by
\be\label{equ.Det902}
\frac{\textit{$\partial$}F_{x,i}}{\textit{$\partial$}x_{c}^{w}}=\frac{\textit{$\partial$}u_{i}^{p}}{\textit{$\partial$}x_{c}^{w}} s_x  A_i+( u_{i}^{p}-u_0) s_x (-R_{31})-f R_{11}=0.
\ee
Then, the pixel-domain projection varies with the receiver position, given by
\be\label{equ.Det903}
\frac{\textit{$\partial$}u_{i}^{p}}{\textit{$\partial$}x_{c}^{w}}=-\frac{1}{s_x A_i}\left[(u_{i}^{p}-u_0)s_x (-R_{31})-f R_{11}\right].
\ee
Denoting $M_i=B_i/A_i$ and $N_i=C_i/A_i$, based on Equations~(\ref{equ.Det901}) and (\ref{equ.Det903}), we have
\be\label{equ.Det1001}
\begin{aligned}
\frac{\textit{$\partial$}u_{i}^{p}}{\textit{$\partial$}x_{c}^{w}}=-\frac{f}{s_x A_i}[M_i  R_{31}-R_{11}],\\ \frac{\textit{$\partial$}v_{i}^{p}}{\textit{$\partial$}x_{c}^{w}}=-\frac{f}{s_y A_i}[N_i  R_{31}-R_{21}].
\end{aligned}
\ee
Assuming square pixel as that for common image sensor, we let $s_{x}=s_{y}$ in the following analysis. Without loss of generality, we normalize the pixel size to $1$ for numerical convenience. 

The partial derivative matrix in the pixel domain, denoted as $\bigtriangleup \textbf{C}^w$, is the partial derivative of $(u_{i}^{p},v_{i}^{p})$ for all $1 \leq i \leq n$ with respect to $x_{c}^{w}, y_{c}^{w}, z_{c}^{w}$. Assuming $n$ LEDs for positioning, the partial derivative matrix $\bigtriangleup \textbf{C}^w$ can be written as Equation~(\ref{equ.Det06}). Note that for horizontal receiver plane with $\theta_x=\theta_y=0$, we have ${\bf R}(\theta_x,\theta_y,\theta_z) ={\bf R}_{z}(\theta_z)$ and $A_i=z_{i}^{w}-z_{c}^{w}$, which is constant given receiver height $h$.
\begin{figure*}
\be\label{equ.Det06}
\begin{aligned}
	\begin{aligned}
		\bigtriangleup \textbf{C}^w&=\left[ 
		\begin{matrix}
			\frac{\textit{$\partial$}u_{1}^{p}}{\textit{$\partial$}x_{c}^{w}}&  \frac{\textit{$\partial$}u_{1}^{p}}{\textit{$\partial$}y_{c}^{w}}&  \frac{\textit{$\partial$}u_{1}^{p}}{\textit{$\partial$}z_{c}^{w}}\vspace{1ex} \\ 
			\frac{\textit{$\partial$}v_{1}^{p}}{\textit{$\partial$}x_{c}^{w}}&
			\frac{\textit{$\partial$}v_{1}^{p}}{\textit{$\partial$}y_{c}^{w}}& \frac{\textit{$\partial$}v_{1}^{p}}{\textit{$\partial$}z_{c}^{w}}\vspace{1ex}  \\
			&	\vdots &  \vspace{1ex}\\
			\frac{\textit{$\partial$}u_{i}^{p}}{\textit{$\partial$}x_{c}^{w}}&
			\frac{\textit{$\partial$}u_{i}^{p}}{\textit{$\partial$}y_{c}^{w}}&	
			\frac{\textit{$\partial$}u_{i}^{p}}{\textit{$\partial$}z_{c}^{w}}\vspace{1ex}  \\
			\frac{\textit{$\partial$}v_{i}^{p}}{\textit{$\partial$}x_{c}^{w}}&
			\frac{\textit{$\partial$}v_{i}^{p}}{\textit{$\partial$}y_{c}^{w}}&
			\frac{\textit{$\partial$}v_{i}^{p}}{\textit{$\partial$}z_{c}^{w}}\vspace{1ex} \\
			&	\vdots &  \vspace{1ex}\\
			\frac{\textit{$\partial$}u_{n}^{p}}{\textit{$\partial$}x_{c}^{w}}&
			\frac{\textit{$\partial$}u_{n}^{p}}{\textit{$\partial$}y_{c}^{w}}& 
			\frac{\textit{$\partial$}u_{n}^{p}}{\textit{$\partial$}z_{c}^{w}}\vspace{1ex}   \\
			\frac{\textit{$\partial$}v_{n}^{p}}{\textit{$\partial$}x_{c}^{w}}&
			\frac{\textit{$\partial$}v_{n}^{p}}{\textit{$\partial$}y_{c}^{w}}&	\frac{\textit{$\partial$}v_{n}^{p}}{\textit{$\partial$}z_{c}^{w}}  \\
		\end{matrix} \right]  \\ \vspace{2ex}
		&= -f \left[ \begin{matrix}
			\frac{1}{A_1}(M_1 R_{31}-R_{11})& \frac{1}{A_1}(M_1 R_{32}-R_{12})&	\frac{1}{A_1}(M_1 R_{33}-R_{13})\vspace{1ex}\\
			\frac{1}{A_1}(N_1 R_{31}-R_{21})& \frac{1}{A_1}(N_1 R_{32}-R_{22})&	\frac{1}{A_1}(N_1 R_{33}-R_{23})\vspace{1ex}\\
			&	\vdots &  \vspace{1ex}\\
			\frac{1}{A_i}(M_i R_{31}-R_{11})& \frac{1}{A_i}(M_i R_{32}-R_{12})&	\frac{1}{A_i}(M_i R_{33}-R_{13})\vspace{1ex}\\
			\frac{1}{A_i}(N_i R_{31}-R_{21})& \frac{1}{A_i}(N_i R_{32}-R_{22})&	\frac{1}{A_i}(N_i R_{33}-R_{23})\vspace{1ex}\\
			&	\vdots &  \vspace{1ex}\\
			\frac{1}{A_n}(M_n R_{31}-R_{11})& \frac{1}{A_n}(M_n R_{32}-R_{12})&	\frac{1}{A_n}(M_n R_{33}-R_{13})\vspace{1ex}\\
			\frac{1}{A_n}(N_n R_{31}-R_{21})& \frac{1}{A_n}(N_n R_{32}-R_{22})&	\frac{1}{A_n}(N_n R_{33}-R_{23})\\
		\end{matrix} \right].
	\end{aligned}
\end{aligned}
\ee
\end{figure*}

Performing singular value decomposition (SVD) on $\Delta \textbf{C}^w$, we have
\be\label{equ.Det08}
\begin{aligned}
	&\Delta \textbf{C}^w=\textbf{U} \boldsymbol{\Sigma} \textbf{V}^H,\ \text{rank}(\Delta \textbf{C}^w) =r, \\
	&\textbf{U}=[\textbf{u}_1,\textbf{u}_2,...,\textbf{u}_m],\ \textbf{V}=[\textbf{v}_1,\textbf{v}_2,...,\textbf{v}_n],\ \\
	&\boldsymbol{\Sigma}= \left[ \begin{matrix}
		\boldsymbol{\Sigma_1}&	\textbf{O}\\
		\textbf{O}&	    \textbf{O}\\	
	\end{matrix} \right],	
\end{aligned}
\ee
where $\boldsymbol{\Sigma_1} =\text{diag}(\text{$\sigma$}_1,\text{$\sigma$}_2,...,\text{$\sigma$}_{r})$ with all positive singular values and $r \leq 3$.

Assuming independent pixel-domain quantization error, we investigate the WCS positioning error due to the quantization error. Denote Moore-Penrose (MP) inverse $(\Delta \textbf{C}^w)^{+}=\textbf{V}\boldsymbol{{\Sigma}}^{+} \textbf{U}^H$, where $\boldsymbol{{\Sigma}}^{+}=[[\boldsymbol{{\Sigma_1}^{-1}}, \textbf{O}]^T,[\textbf{O},\textbf{O}]^T]$. Let $\textbf{e}\in {\mathbb {R}}^{m\times1}$ be the pixel-domain quantization error, satisfying $E[\textbf{ee}^H] \triangleq \beta \bf{I_m}$ for certain constant $\beta$ due to the independent quantization error assumption, where constant $\beta$ is related to the camera and its pixel plane characteristics.

Given quantization error $\textbf{e}$, we adopt least-square (LS) criterion and find the solution with the least norm, which can be obtained via multiplying MP inverse $(\Delta \textbf{C}^w)^{+}$, as the performance metric of positioning error. The expectation of its Frobenius norm is given by
\be\label{equ.Det00}
\begin{aligned}
	\textbf{E}\left[\parallel{(\Delta \textbf{C}^w)^{+} \textbf{e}}\parallel_F\right] &=\textbf{E}\left[\parallel\textbf{V}\boldsymbol{{\Sigma}^{+}} \textbf{U}^H\textbf{e}\parallel _{F}\right]\\ &=\textbf{E}\left[\parallel\boldsymbol{{\Sigma}^{+}} \textbf{U}^H\textbf{e}\parallel _{F}\right]\\
	&=\textbf{E}\left[\sqrt{\text{tr}\{\boldsymbol{\Sigma}^{+}\textbf{U}^H \textbf{e}(\boldsymbol{\Sigma}^{+} \textbf{U}^H \textbf{e})^H\}}\right]\\
	&=\sqrt{\text{tr}\{\textbf{E}\left[\boldsymbol{\Sigma}^{+} \beta \bf{I_m} (\boldsymbol{\Sigma}^{+})^H\right]\}}\\
	&=\sqrt{\beta \cdot \sum_{i=1}^r{\frac{1}{{\text{$\sigma$}_{\text{i}}^{2}}}}}.
\end{aligned}
\ee

Accordingly, we define the normalized positioning error metric as NPEM, given by Equation~(\ref{equ.Det09}) for camera-based positioning,
\be\label{equ.Det09}
\text{NPEM} \triangleq \sqrt{\beta \cdot \sum_{i=1}^r{\frac{1}{{\text{$\sigma$}_{\text{i}}^{2}}}}}.
\ee

\subsection{NPEM of $3D$ Rotation in $O$-$xyz$ Space}\label{3D error perf}
Consider the 3D rotation in the $O$-$xyz$ space, where rotation angles $(\theta_x, \theta_y, \theta_z)$ and position $(x_{c}^w,y_{c}^w,z_{c}^w)$ are both unknown with $\theta_x\in (-\frac{\pi}{2},\frac{\pi}{2}),\ \theta_y\in (-\frac{\pi}{2}, \frac{\pi}{2})$ and $\theta_z\in (0,2\pi )$.

Figure~\ref{lednumbercaptured} shows the positioning configuration under rotation matrix ${\bf R}(\theta_x,\theta_y,\theta_z)$. The coverage area of a camera is a cone with $(x_{c}^w,y_{c}^w,z_{c}^w)$ as the vertex and the receiver FOV as the radiation angle. Then, the LEDs captured by the smartphone camera are those contained within the intersection of the conical surface and the transmitter plane.
\begin{figure}[htbp]
	\centering
	\includegraphics[width=4cm, height=5cm]{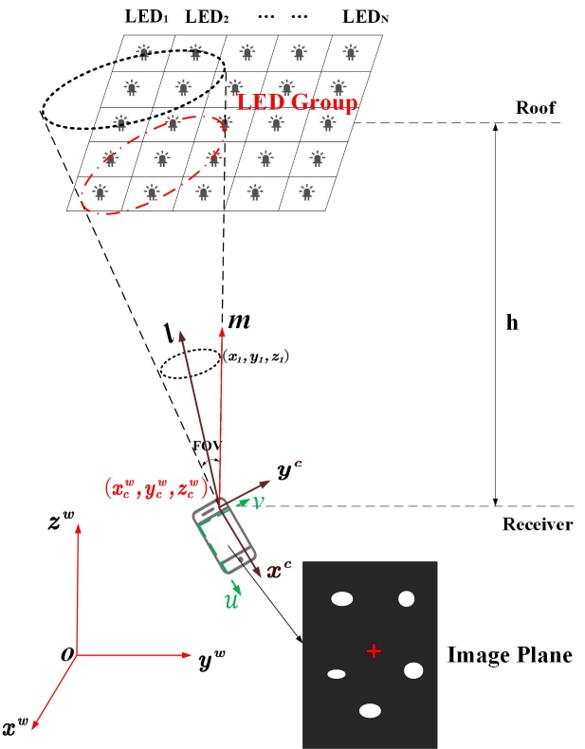}
	\caption{The LEDs captured by the smartphone camera (totally $5$ LEDs).}\label{lednumbercaptured}
\end{figure}

Consider the following parameter settings: $5$m$\times5$m$\times3$m room, LED height from the ground $z_{i}^{w}=2.75$\ m, $FOV=90$ deg, $\beta= 1.7857$~\cite{8644462,8519633}. Consider the density of LED cell layouts as $1$ center LED/m$^2$, with totally $25$ LED transmitters uniformly distributed at positons $(i - 0.5, j - 0.5)$ for $1 \leq i,j \leq 5$. Figure~\ref{(2.5,2.5,1)(10,10,50)} shows the number of captured LEDs and the NPEM with rotation angles $\theta_x\in (-\frac{\pi}{2},\frac{\pi}{2}),\ \theta_y\in (-\frac{\pi}{2}, \frac{\pi}{2})$ and $\theta_z\in (0,2\pi )$ under angle interval $(\Delta\theta_x, \Delta\theta_y, \Delta\theta_z)= (\frac{\pi}{18},\frac{\pi}{18},\frac{5\pi}{18})$ at camera center $(x_{c}^w,y_{c}^w,z_{c}^w)=(2.5,2.5,1)$. It can be seen that the number of captured LEDs is symmetric on the rotation angle due to symmetric layout, and the minimum NPEM direction changes at $\theta_z=90, 180, 270$, which shows that at a certain receiver point, the rotation angle with the minimum NPEM is not necessarily parallel to the horizontal plane.
\begin{figure}[htbp]
	\centering
	\subfigure[LED number.]{
		\includegraphics[width=2.7cm, height=2cm]{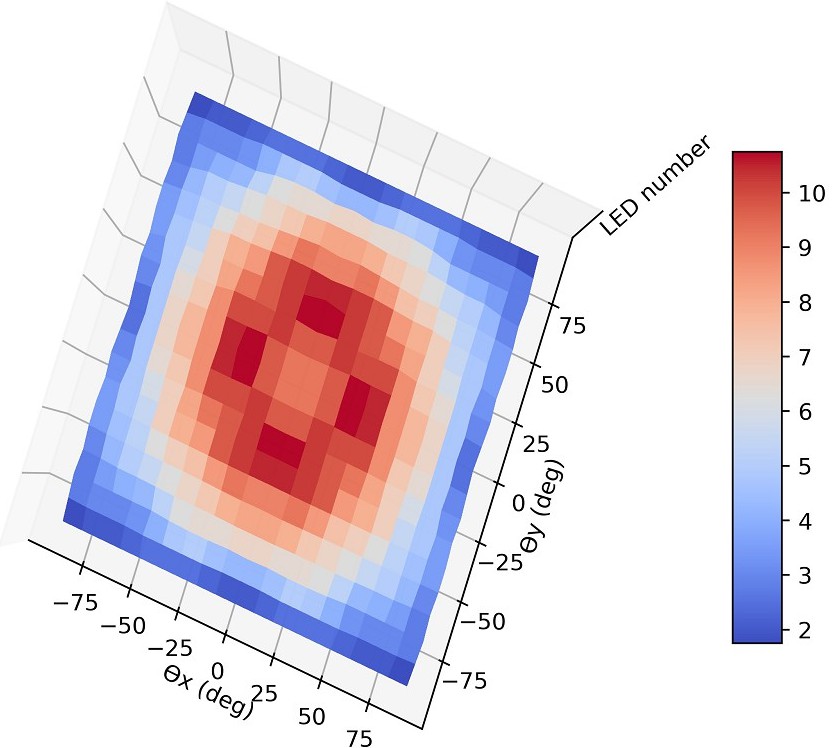}
	}\hspace{-1ex}
	\subfigure[$\theta_z$=0]{
		\includegraphics[width=2.7cm, height=2cm]{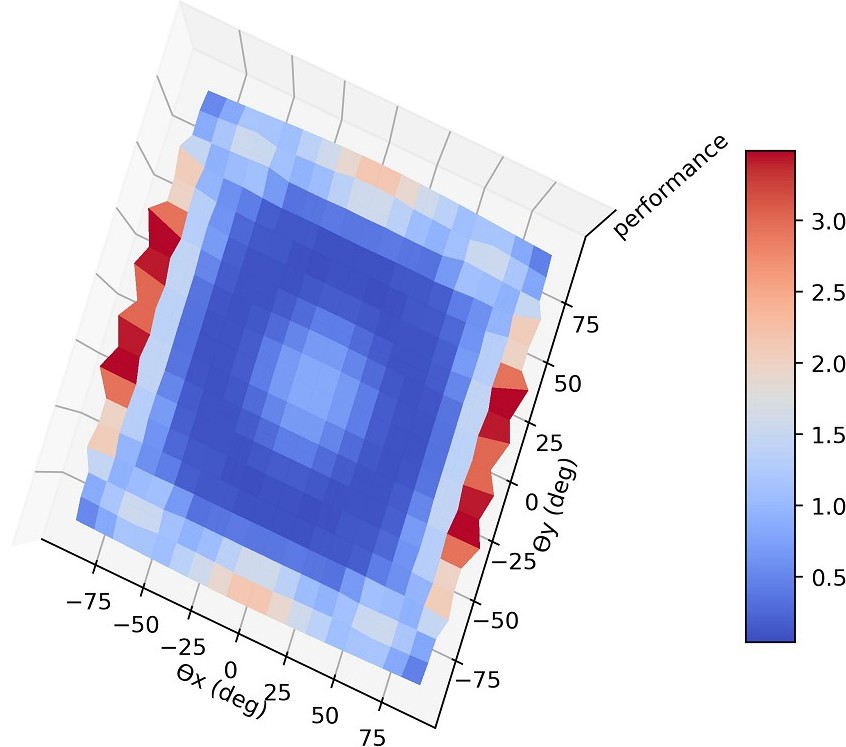}
	}
	\hspace{-1ex}
	\subfigure[$\theta_z$=50]{
		\includegraphics[width=2.7cm, height=2cm]{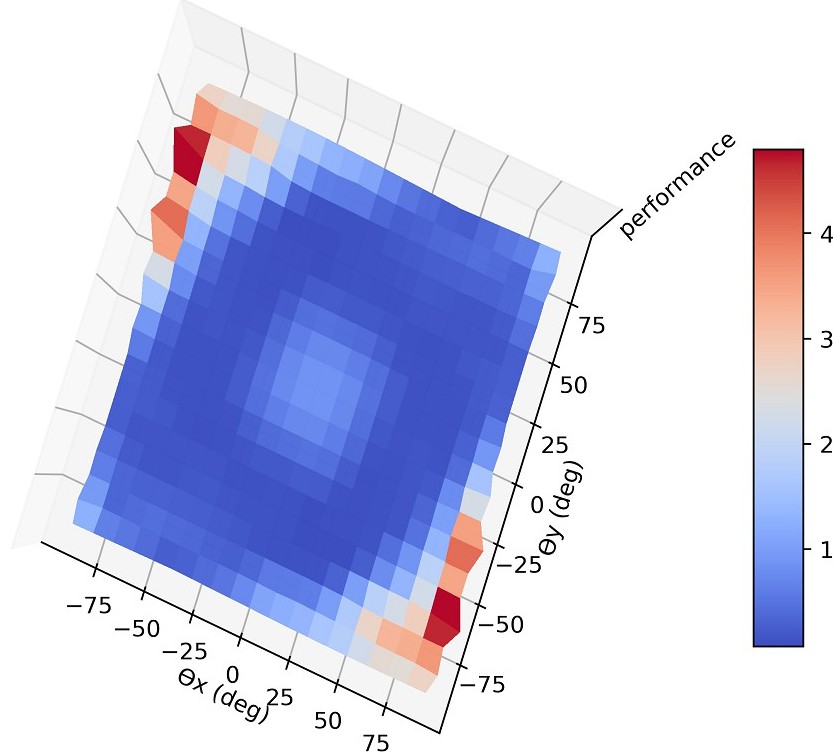}
	}\\ \vspace{1ex}
	\hspace{-1ex}
	\subfigure[$\theta_z$=100]{
		\includegraphics[width=2.7cm, height=2cm]{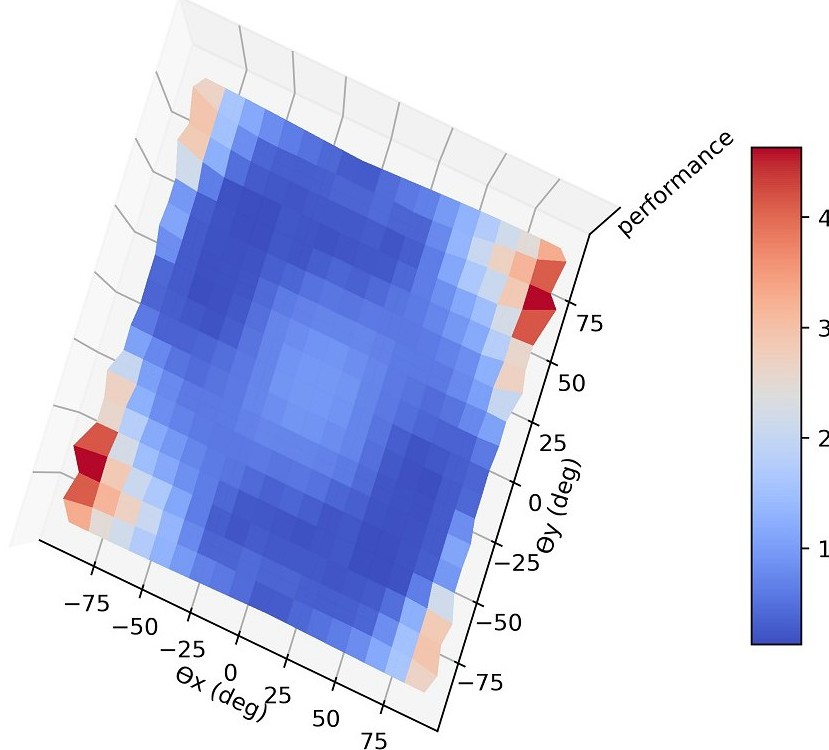}
	}
	\hspace{-1.2ex}
	\subfigure[$\theta_z$=150]{
		\includegraphics[width=2.7cm, height=2cm]{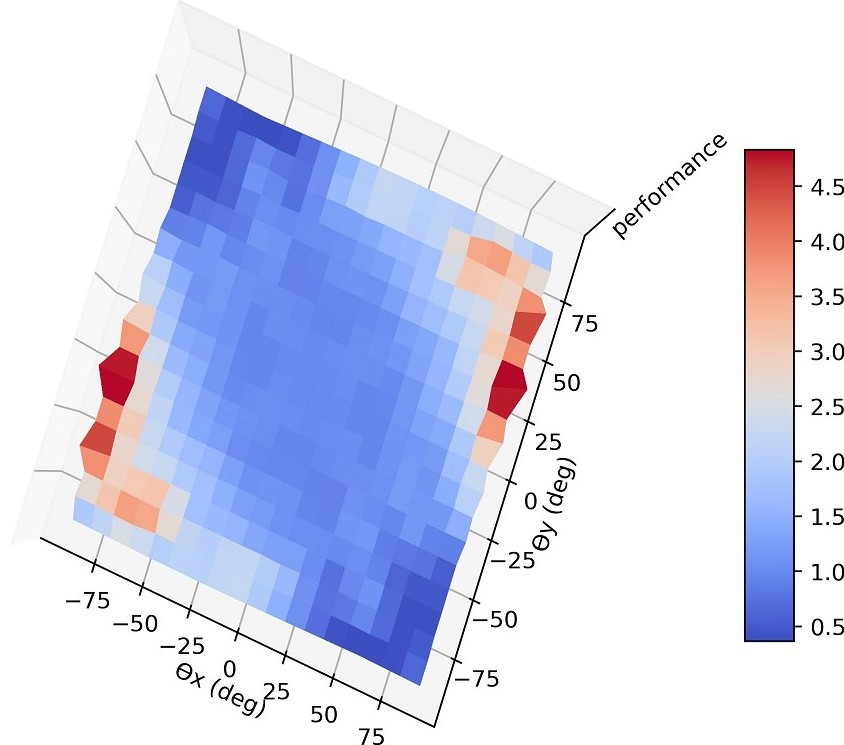}
	}
	\hspace{-1.2ex}
	\subfigure[$\theta_z$=200]{
		\includegraphics[width=2.7cm, height=2cm]{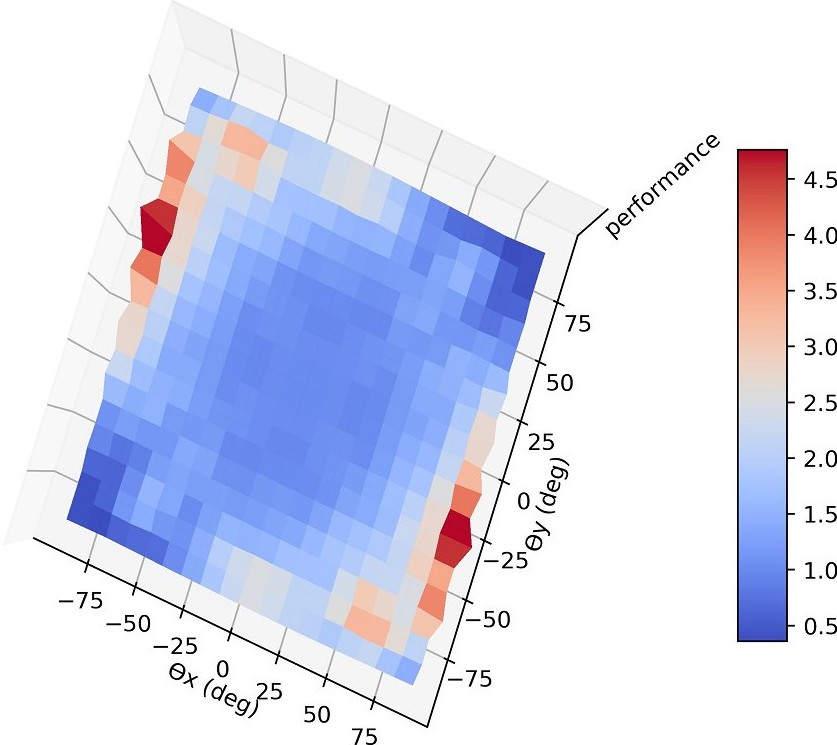}
	}\\ \vspace{1ex}
	\hspace{-1ex}
	\subfigure[$\theta_z$=250]{
		\includegraphics[width=2.7cm, height=2cm]{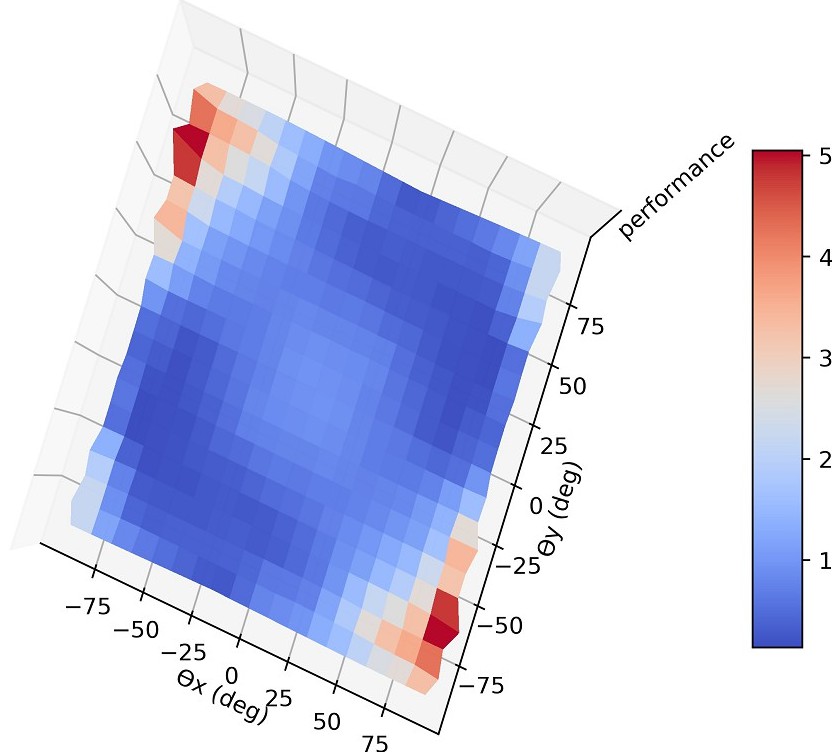}
	}
	\hspace{-1.2ex}
	\subfigure[$\theta_z$=300]{
		\includegraphics[width=2.7cm, height=2cm]{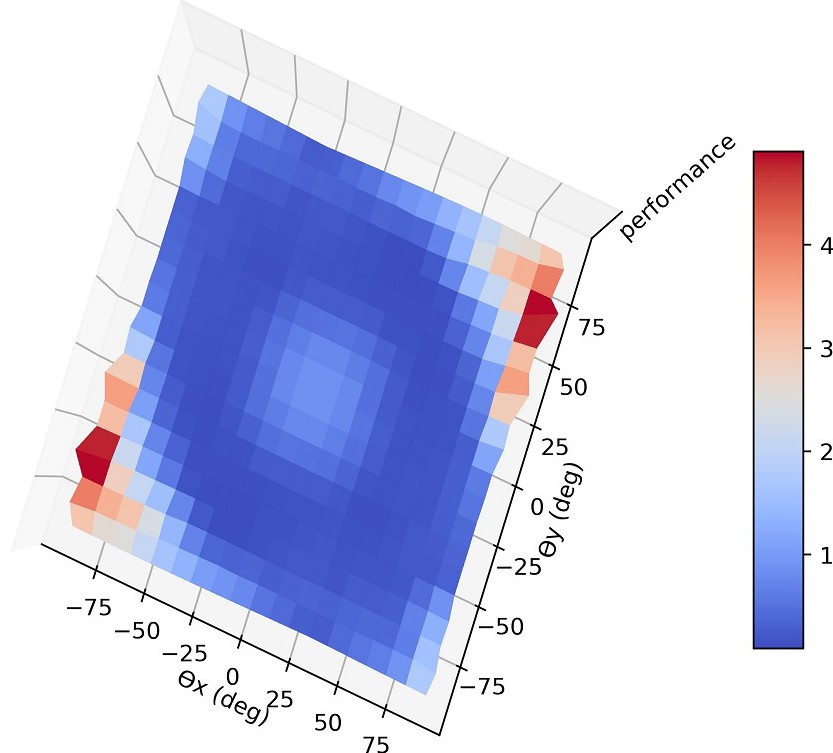}
	}
	\hspace{-1.2ex}
	\subfigure[$\theta_z$=350]{
		\includegraphics[width=2.7cm, height=2cm]{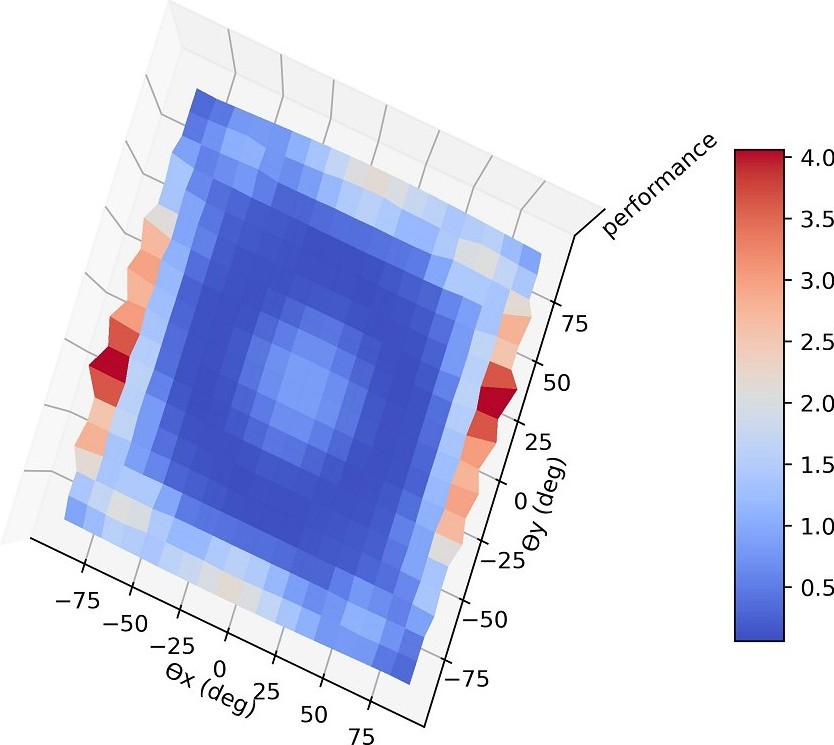}
	}
	\caption{The number of captured LEDs and the NPEM results in $3D$ O-xyz Rotation.}\label{(2.5,2.5,1)(10,10,50)}
\end{figure}
\subsection{NPEM of $2D$ Rotation in $xoy$-Plane}\label{xoy-Plane}
Considering the 2D rotation in the $xoy$-plane with multiple LEDs, we have the following theorem.
\begin{theorem}
	NPEM is constant with $\theta_z$ under the 2D $xoy$-plane rotation for parallel transmitter plane and receiver plane.
\end{theorem}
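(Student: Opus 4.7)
The plan is to prove that the nonzero singular values of $\Delta\textbf{C}^w$ are independent of $\theta_z$, from which the claim follows by the definition of NPEM in Equation~(\ref{equ.Det09}). Rather than diagonalize $\Delta\textbf{C}^w$ directly, it is cleaner to work with the $3\times 3$ Gram matrix $\textbf{G}(\theta_z) \triangleq (\Delta\textbf{C}^w)^T \Delta\textbf{C}^w$, whose nonzero eigenvalues are exactly $\sigma_1^2,\ldots,\sigma_r^2$. It therefore suffices to show that $\textbf{G}(\theta_z)$ does not depend on $\theta_z$.

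First I would specialize the relevant expressions to the parallel-plane case. With $\theta_x = \theta_y = 0$, ${\bf R}(\theta_x,\theta_y,\theta_z) = {\bf R}_z(\theta_z)$, giving $R_{31} = R_{32} = 0$, $R_{33} = 1$, and hence $A_i = h - z_c^w$, a constant independent of $i$ which I denote $A$. Substituting into the rows of $\Delta\textbf{C}^w$ from Equation~(\ref{equ.Det06}) shows that the two rows belonging to LED $i$ take the block form $\frac{f}{A}\bigl[{\bf R}_z(\theta_z),\ -(M_i, N_i)^T\bigr]$, so that the first two columns form the rotation matrix ${\bf R}_z(\theta_z)$ (which does not depend on $i$) while only the third column depends on the LED index. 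A second useful identity is $(B_i, C_i)^T = {\bf R}_z(\theta_z)(x_i^w - x_c^w, y_i^w - y_c^w)^T$, so $(M_i, N_i)^T = \frac{1}{A}{\bf R}_z(\theta_z)(\Delta x_i, \Delta y_i)^T$ where $\Delta x_i := x_i^w - x_c^w$ and $\Delta y_i := y_i^w - y_c^w$.

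Next I would assemble the Gram matrix by summing each LED's $3\times 3$ contribution. The upper-left $2\times 2$ block collapses to $n(f/A)^2 \textbf{I}_2$ because ${\bf R}_z(\theta_z)^T {\bf R}_z(\theta_z) = \textbf{I}_2$. The cross entries $\textbf{G}_{13}, \textbf{G}_{23}$ reduce, via ${\bf R}_z(\theta_z)^T(\cos\theta_z, \sin\theta_z)^T = (1,0)^T$ and ${\bf R}_z(\theta_z)^T(\sin\theta_z, -\cos\theta_z)^T = (0,-1)^T$, to linear expressions in the planar displacements $(\Delta x_i, \Delta y_i)$ alone, with no residual angular factor. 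The bottom-right entry $\textbf{G}_{33}$ equals $(f/A)^2 \sum_i (M_i^2 + N_i^2) = (f^2/A^4) \sum_i\bigl((\Delta x_i)^2 + (\Delta y_i)^2\bigr)$, again by orthogonality of the rotation. All three groups of entries are therefore free of $\theta_z$, so $\textbf{G}(\theta_z)$ is constant in $\theta_z$, whence its eigenvalues and the NPEM are as well.

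The main obstacle I anticipate is purely bookkeeping: one must keep straight the signs in $R_{12}$ versus $R_{21}$ and be consistent about whether the rotation relating $(M_i, N_i)$ to $(\Delta x_i, \Delta y_i)$ is ${\bf R}_z(\theta_z)$ or its transpose, since a mismatched sign convention would make the cross terms appear to retain $\theta_z$ dependence even though they do not. No deeper estimates or inequalities are needed; the result is essentially algebraic once the orthogonality of ${\bf R}_z(\theta_z)$ is isolated as the operative structure, and the horizontal-plane assumption $A_i \equiv A$ is what allows the common factor $1/A$ to be pulled outside the sum cleanly.
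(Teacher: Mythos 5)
Your proposal is correct and takes essentially the same route as the paper: both pass to the $3\times 3$ Gram matrix $(\Delta\textbf{C}^w)^H\Delta\textbf{C}^w$ and use the orthogonality of the $z$-rotation (together with $A_i$ being constant in the parallel-plane case) to show its entries depend only on the planar displacements $(x_i^w-x_c^w,\,y_i^w-y_c^w)$ and not on $\theta_z$, which is exactly the content of Equation~(\ref{equ.Det17}). The paper then additionally writes out $\sum_{i}1/\sigma_i^2$ in closed form in Equation~(\ref{equ.Det18}), but that extra step is not needed for the theorem, so stopping at the $\theta_z$-independence of the Gram matrix, as you do, is fine.
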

\begin{proof}
Since the transmitter plane is parallel to the horizontal receiver plane, we have $\theta_x=\theta_y=0$ and ${\bf R}_x(\theta _x) ={\bf R}_{y}(\theta _y) = \textbf{I}_3$. Then, the 3D rotation matrix is recast as
\be\label{equ.Det904}
{\bf R}(\theta_x,\theta_y,\theta_z) = {\bf R}_z(\theta_z)=\left[ \begin{matrix}
\cos\theta _z&		-\sin\theta _z&		0\\
\sin\theta _z&		\cos\theta _z&		0\\
0&		0&		1\\
\end{matrix} \right].
\ee
Then, $M_i$ and $N_i$ in Equation~(\ref{equ.Det06}) are given by
\be\label{equ.Det905}
\begin{aligned}	
 M_i&=\frac{R_{11}\left( x_{i}^{w}-x_{c}^{w} \right) +R_{12}\left( y_{i}^{w}-y_{c}^{w} \right)}{\left( z_{i}^{w}-z_{c}^{w} \right)}\\
   &=-(R_{11}m_i+R_{12}n_i), \\ 
 N_i&=\frac{R_{21}\left( x_{i}^{w}-x_{c}^{w} \right) +R_{22}\left( y_{i}^{w}-y_{c}^{w} \right)}{\left( z_{i}^{w}-z_{c}^{w} \right)}\\
   &=-(R_{21}m_i+R_{22}n_i),   
\end{aligned}
\ee
where $m_i=-\frac{\left( x_{i}^{w}-x_{c}^{w} \right)}{\left( z_{i}^{w}-z_{c}^{w} \right)}$ and $n_i=-\frac{\left( y_{i}^{w}-y_{c}^{w} \right)}{\left( z_{i}^{w}-z_{c}^{w} \right)}$. Thus, Equation~(\ref{equ.Det06}) can be rewritten as
\be\label{equ.Det15}
\bigtriangleup \textbf{C}^w=\frac{-f}{z_{i}^{w}-z_{c}^{w}} \left[ \begin{matrix}
	R_{11}&		\,\,R_{12}&		R_{11}m_1+R_{12}n_1\\
	R_{21}&		\,\,R_{22}&		R_{21}m_1+R_{22}n_1\\
	& \vdots &\\
	R_{11}&		\,\,R_{12}&		R_{11}m_i+R_{12}n_i\,\\
	R_{21}&		\,\,R_{22}&		R_{21}m_i+R_{22}n_i\\
	& \vdots &\\
	R_{11}&		\,\,R_{12}&		R_{11}m_n+R_{12}n_n\,\\
	R_{21}&		\,\,R_{22}&		R_{21}m_n+R_{22}n_n\\
\end{matrix} \right].
\ee

Using Equation~(\ref{equ.Det15}), we have Equation~(\ref{equ.Det17}).
\begin{figure*}
\be\label{equ.Det17}
\left( \Delta \textbf{C}^w \right) ^H \cdot \Delta \textbf{C}^w=\left[\frac{f}{z_{i}^{w}-z_{c}^{w}}\right]^2 \cdot \left[ \begin{matrix}
	n&		0&		\sum_{i=1}^n{m_i}\vspace{1ex}\\
	0&		n&		\sum_{i=1}^n{n_i}\vspace{1ex}\\
	\sum_{i=1}^n{m_i}&		\sum_{i=1}^n{n_i}&		\sum_{i=1}^n{\left( m_{i}^{2}+n_{i}^{2} \right)}\\
\end{matrix} \right]. 
\ee
\end{figure*}
Furthermore, based on the relationship of eigenvalue of matrix $(\Delta {\mathbf C}^w)^H\cdot \Delta{\mathbf C}^w$ and singular value $\sigma_i$ of matrix $\Delta{\mathbf C}^w$, we have Equation~(\ref{equ.Det18}),
\begin{figure*}
\be\label{equ.Det18}
\sum_{i=1}^3{\frac{1}{\sigma^2_i}=[\frac{z_{i}^{w}-z_{c}^{w}}{f}]^2 \cdot \left( \frac{2}{\sum_{i=1}^n{\left( m_{i}^{2}+n_{i}^{2} \right)}-2r_n+n}+\frac{2}{\sum_{i=1}^n{\left( m_{i}^{2}+n_{i}^{2} \right) +2r_n+n}}+\frac{1}{n} \right)},
\ee
\end{figure*}
where $$2r_n=\bigg{ \{ }\sum_{i=1}^n{\left( m_{i}^{4}+n_{i}^{4} \right)}+2\sum_{i\ne j}^n{\left( m_{i}^{2}m_{j}^{2}+n_{i}^{2}n_{j}^{2} \right)}$$ $$+2\sum_{i,j=1}^n{m_{i}^{2}n_{j}^{2}}-2\left( n-2 \right) \sum_{i=1}^n{\left( m_{i}^{2}+n_{i}^{2} \right)}$$ $$+8\sum_{i\ne j}^n{\left( m_im_j+n_in_j \right)}+n^2\bigg{ \} }^{1/2}.$$

According to Equations~(\ref{equ.Det09}) and (\ref{equ.Det18}), the NPEM does not depend on $\theta_z$ for parallel transmitter plane and receiver plane.
\end{proof}

For the rotation in 2D $xoy$-plane, we have ${\bf R}(\theta_x,\theta_y,\theta_z)={\bf R}_{z}(\theta_z)$. Then, Equation~(\ref{equ.Det06}) can also be simplified using pixel coordinates $(u_{i}^{p},v_{i}^{p})$, given by
\be\label{equ.Det10}
\begin{aligned}
	\begin{aligned}
		\bigtriangleup \textbf{C}^w = \frac{f}{z_{i}^{w}-z_{c}^{w}} \left[ 
		\begin{matrix}
			\cos \theta _z&	 -\sin \theta _z&	\left( u_{1}^{p}-u_0 \right) s_x /f  \\
			\sin \theta _z&	\,\,\cos \theta _z&	\left( v_{1}^{p}-v_0 \right) s_y /f  \\
			& \vdots & \\
			\cos \theta _z&	 -\sin \theta _z&	\left( u_{i}^{p}-u_0 \right) s_x /f  \\
			\sin \theta _z&	\,\,\cos \theta _z&	\left( v_{i}^{p}-v_0 \right) s_y /f  \\
			& \vdots & \\
			\cos \theta _z&	 -\sin \theta _z&	\left( u_{n}^{p}-u_0 \right) s_x /f  \\
			\sin \theta _z&	\,\,\cos \theta _z&	\left( v_{n}^{p}-v_0 \right) s_y /f  \\
		\end{matrix} \right]. 
	\end{aligned}
\end{aligned}
\ee

Consider the 2D $xoy$-plane rotation with $3$ LEDs in Figure~\ref{2D plane rotation}(a), where the camera resolution $=2560\times1536$, $z_{i}^w-z_{c}^w=2.265$\ m, $f=2.4$\ mm, and $(u_0,\ v_0)=(1305,\ 774)$ after the rotation angle calibration referring to the experimental configuration in~\cite{8644462} and~\cite{8519633}. Table~\ref{tab.Table1} shows the measurements of $(u_{i}^{p}, v_{i}^{p})$ under $8$ values of $\theta_z$ at point $(150, 50)$~\cite{8644462,8519633}. The singular values and NPEM are shown in Figure~\ref{2D plane rotation}(b) according to Equation~(\ref{equ.Det10}). Such experimental results verify Theorem~$1$ that the NPEM is not sensitive to $\theta_z$ if the transmitter plane and receiver plane are parallel.
\begin{figure}[htbp]
	\centering
	\subfigure[Experimental configuration of the rotation in 2D xoy-plane.]{
		\includegraphics[width=3cm, height=4.2cm]{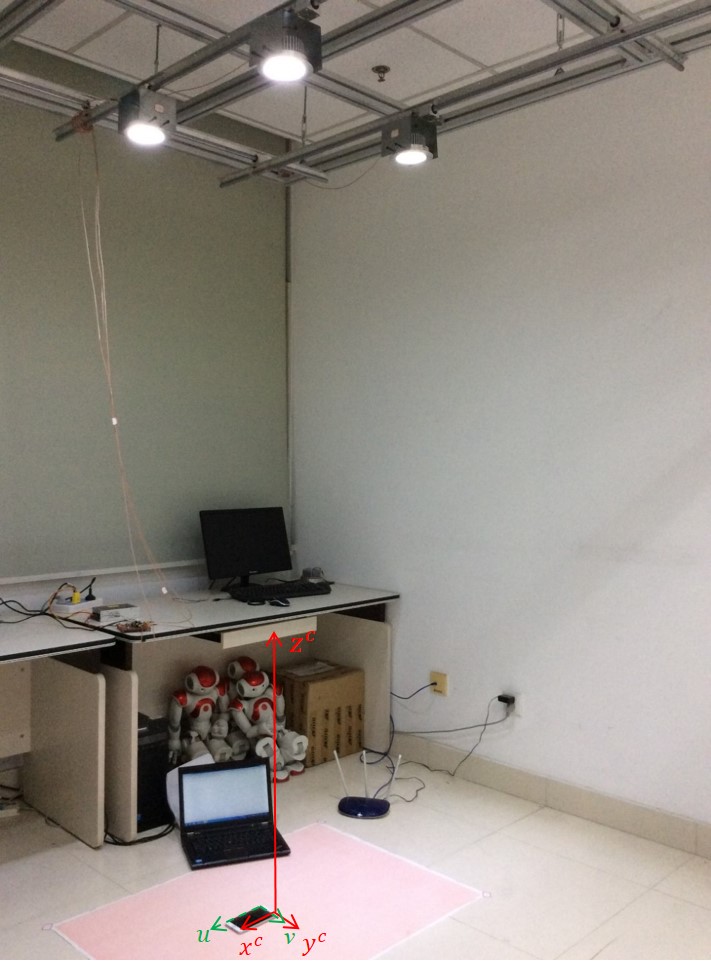}
	}
	\subfigure[The singular values and NPEM with different $\theta_z$.]{
		\includegraphics[width=5.2cm, height=4.2cm]{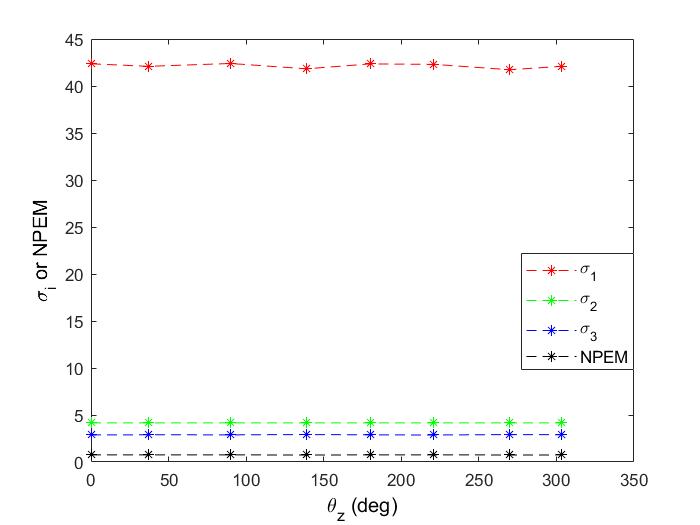}
	}
	\caption{Experimental configuration and NPEM in 2D xoy-plane rotation.}\label{2D plane rotation}
\end{figure}
\renewcommand\arraystretch{1.1}
\begin{table}[htbp]
	\centering
	\caption{The measurements of $(u_{i}^{p}, v_{i}^{p})$ under different rotation angles $\theta_z$}\label{tab.Table1}
	\begin{tabular}{|cc|cc|}
		\hline
		$\theta_z$ (deg) & $(u_{i}^{p}, v_{i}^{p})$ & $\theta_z$ (deg) & $(u_{i}^{p}, v_{i}^{p})$\\
		\hline
		\multirow{3}{*}{0} & (1363.5, 810.5)  & \multirow{3}{*}{180} & (1248.5, 738.5) \\
		& (1373.5, 170.5)  &  &(1237.5, 1377.5) \\
		& (1905.5, 472.5)  &  &(700.5, 1067.5) \\
		\hline
		\multirow{3}{*}{37} & (1328.5, 836.5)  & \multirow{3}{*}{221} & (1283.5, 711.5) \\
		& (1715.5, 325.5)  &  &(860.5, 1188.5) \\
		& (1957.5, 895.5)  &  &(654.5, 613.5) \\
		\hline
		\multirow{3}{*}{90} & (1261.5, 837.5)  & \multirow{3}{*}{270} & (1338.5, 732.5) \\
		& (1907.5, 842.5)  &  &(699.5, 727.5) \\
		& (1604.5, 1376.5)  &  &(1003.5, 192.5) \\
		\hline
		\multirow{3}{*}{139} & (1235.5, 778.5)  & \multirow{3}{*}{303} & (1369.5, 749.5) \\
		& (1655.5, 1258.5)  &  &(842.5, 393.5) \\
		& (1046.5, 1386.5)  &  &(1388.5, 107.5) \\
		\hline	
	\end{tabular}
\end{table}

\section{The relationship between NPEM and LED cell layout}\label{sec.4}
We consider parallel transmitter plane and receiver plane, i.e., $\{\theta_x=\theta_y =0, \theta_z \in [0,2\pi) \}$, which is commonly adopted for real terminal positioning. We further consider the NPEM under square, hexagonal, triangular cell layouts, as shown in Figure~\ref{3 LED layouts}, where each LED is located at the centroid of one cell. The LED density of the three cell layouts is one LED per square meter, such that the LED spacing distances of square, hexagonal and triangular cells are $1$\ m, $\sqrt{2/\sqrt{3}}$\ m and $2\sqrt{1/(3\sqrt{3})}$\ m, respectively. 
\begin{figure}[htbp]
	\centering
	\includegraphics[width=8.5cm, height=3cm]{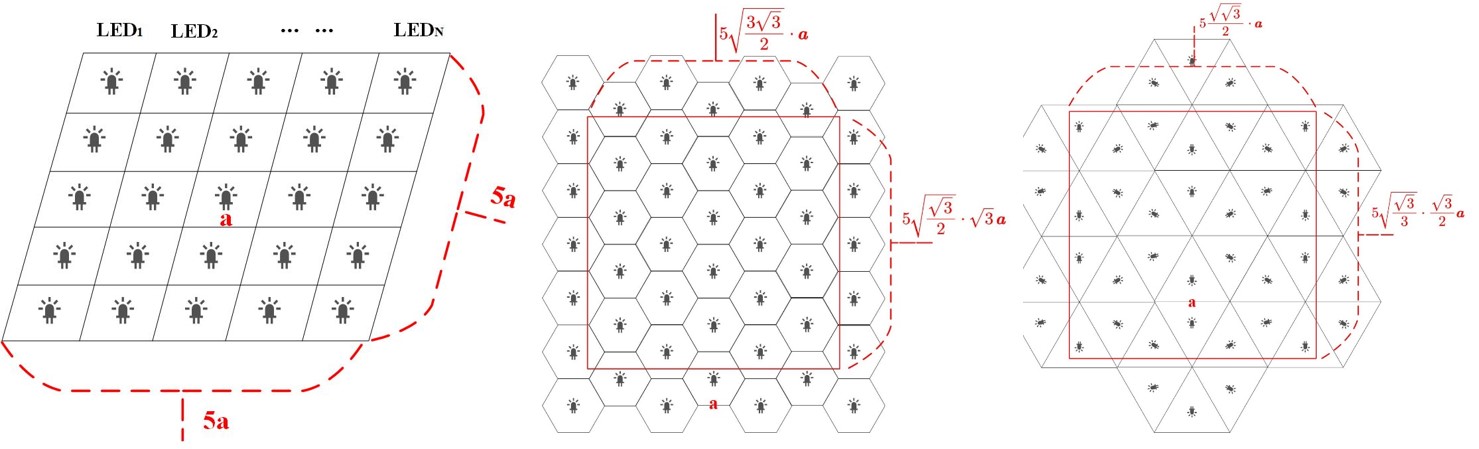}
	\caption{Square, hexagonal and triangular transmitter cell layouts.}\label{3 LED layouts}
\end{figure}
\subsection{The Relationship between NPEM and the Number of Captured LEDs}\label{22}
We explore the relationship between NPEM and the number of captured LEDs under three cell layouts for dense receiver points at different receiver heights.
 
For parallel transmitter plane and receiver plane (i.e., $\theta_x,\theta_y=0$),
\begin{figure}[htbp]
	\centering
	\includegraphics[width=8.8cm, height=3.7cm]{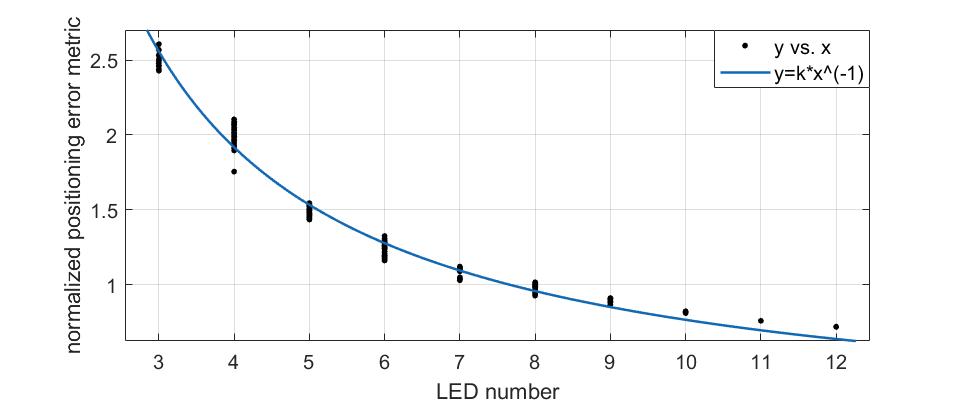}
	\caption{The fitting result of NPEM versus the number of captured LEDs at receiver height $h=1$\ m for square cell layout.}\label{para 3 layouts thetaz=0}
\end{figure}
we obtain the NPEM along with the number of captured LEDs at receiver points $x_{c}^w\in(0:0.1:2.5),\ y_{c}^w\in(0:0.1:5)$ and $z_{c}^w=h$. The fitting result of the NPEM with respect to the number of captured LEDs at $h=1$\ m for square cell layout is shown in Figure~\ref{para 3 layouts thetaz=0}. It can be seen that the NPEM and the number of captured LEDs under square cell layout can be well approximated by relationship $y=kx^{-1}$. Similar results can be obtained for hexagonal and triangular cell layouts. The relationship between NPEM and the number of captured LEDs for all the three cell layouts at different receiver heights is shown in Figure~\ref {para 2layouts thetaz=0}.
\begin{figure}[htbp]
	\centering
	\subfigure[Square cell layout.]{
		\includegraphics[width=7cm, height=5cm]{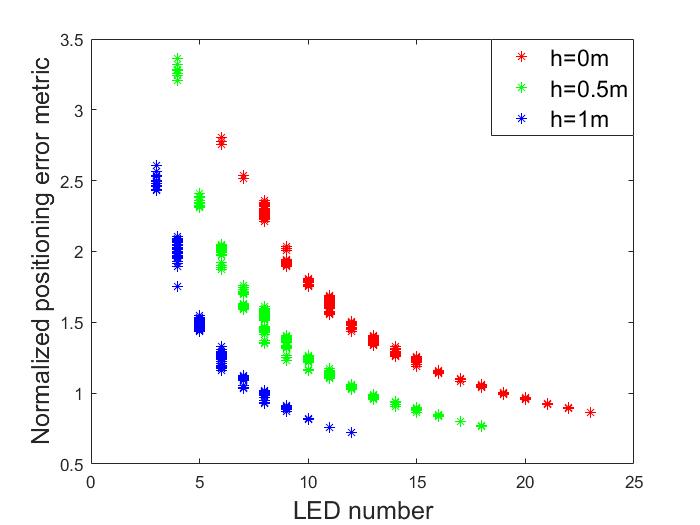}
	}
	\subfigure[Hexagonal cell layout.]{
		\includegraphics[width=7cm, height=5cm]{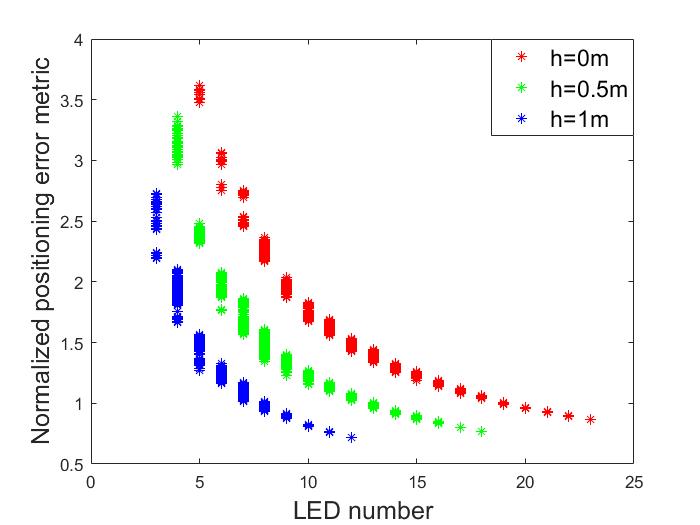}
	}
	\subfigure[Triangular cell layout.]{
		\includegraphics[width=7cm, height=5cm]{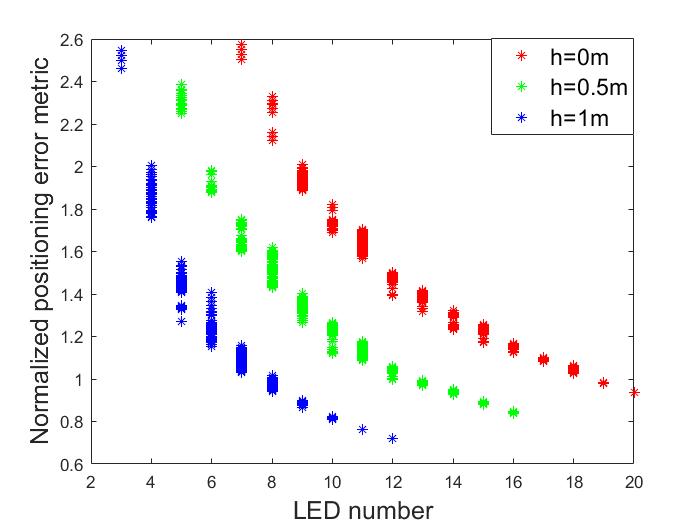}
	}
	\caption{The NPEM versus the number of captured LEDs at different receiver heights for the three cell layouts.}\label{para 2layouts thetaz=0}
\end{figure}

Table~\ref{tab.Table6} shows the fitting parameters and the approximation accuracy at receiver heights of $0$\ m, $0.5$\ m and $1$\ m for the three cell layouts. The determination coefficient R-square represents the quality of data fitting, with the range of $[0, 1]$, the closer it is to $1$, the better the fitting quality is. It can be seen that the NPEM is approximately in inverse proportion to the number of captured LEDs, with close fitting coefficients for all the three cell layouts given receiver height $h$. For a certain layout, take the square cell layout as example, it can also be seen that coefficient $k$ decreases as receiver height $h$ increases, due to shorter transceiver distance that leads to lower NPEM under the same number of captured LEDs.
\begin{table*}[htbp]
*	\centering
	\caption{The fitting results of NPEM with respect to the number of captured LEDs at different receiver heights $h$ for the three cell layouts}\label{tab.Table6}
	\begin{tabular}{|c|c|c|c|c|c|}
		\hline
		Receiver Height (m) & Fitting Curve & Layouts & Coefficient & R-square & RMSE \\
		\hline
		\multirow{3}{*}{$h$=0} & \multirow{9}{*}{$y=kx^{-1}$} & Square & $k$=18.07 & 0.9861 & 0.04647 \\
		\cline{3-6}
		&  & Hexagon & $k$=17.95 & 0.9907 & 0.04803 \\
		\cline{3-6}
		&  & Triangle &  $k$=17.91 & 0.9847 & 0.03898	 \\
		\cline{1-1}\cline{3-6}
		\multirow{3}{*}{$h$=0.5} &  & Square & $k$=12.19 & 0.9763 & 0.07045 \\
		\cline{3-6}
		&  & Hexagon & $k$=12.13 & 0.9864 & 0.06301 \\
		\cline{3-6}
		&  & Triangle & $k$=12.14	& 0.9693 & 0.05508 \\
		\cline{1-1}\cline{3-6}
		\multirow{3}{*}{$h$=1} & & Square & $k$=7.671 & 0.9799 & 0.05631 \\
		\cline{3-6}
		&  & Hexagon &  $k$=7.562 & 0.9712 & 0.07343   \\
		\cline{3-6}
		&  & Triangle & $k$=7.554	 & 0.9613 & 0.05956	 \\
		\cline{1-1}\cline{3-6}
		\hline	
	\end{tabular}
\end{table*}

We can obtain the mean NPEM at different receiver heights for the three cell layouts from Figure~\ref{para 2layouts thetaz=0}, which is shown in Table~\ref{tab.Table10}. It can also be seen that under the density of one LED per square meter, the triangular cell layout with smaller LED spacing leads to lower NPEM, due to more possibility of capturing more LEDs, while hexagonal cell layout shows higher NPEM due to larger LED spacing. However, the hexagonal cell layout is more preferred for communication due to weaker inter-cell interference. Thus, the optimal layout for communication and positioning may not be perfectly aligned, which raises another problem on the joint design of transmitter layout for communication and positioning.
\begin{table}[htbp]
	\centering
	\caption{The mean NPEM at different receiver heights $h$ for the three cell layouts}\label{tab.Table10}
	\begin{tabular}{|c|c|c|}
		\hline
		Layouts  & Receiver Height $h$ (m) & Mean NPEM \\
		\hline
		\multirow{3}{*}{Square}  & 0 & 1.4193  \\
		\cline{2-3}
		& 0.5 & 1.3260  \\
		\cline{2-3}
		& 1 &  1.2089 	 \\
		\cline{1-1}\cline{2-3}
		\multirow{3}{*}{Hexagon} &  0 & 1.5625  \\
		\cline{2-3}
		& 0.5 & 1.4533  \\
		\cline{2-3}
		& 1 & 1.2986	\\
		\cline{1-1}\cline{2-3}
		\multirow{3}{*}{Triangle} & 0 & 1.4091 \\
		\cline{2-3}
		& 0.5 &  1.2514   \\
		\cline{2-3}
		& 1 & 1.1315	 \\
		\cline{1-1}\cline{2-3}
		\hline	
	\end{tabular}
\end{table}

\subsection{The Relationship between Simulated Positioning Error and the Number of Captured LEDs for Square Cell Layout}\label{222}
To show that the inverse proportion approximation relationship $y=kx^{-1}$ is valid for camera-based positioning, we carry out simulations and figure out the relationship between the simulated positioning error and the number of captured LEDs, under square cell layout for parallel transmitter plane and receiver plane. Considering that $\theta_z$ does not affect the NPEM, we set $\theta_z=0$ such that rotation matrix ${\bf R}={\bf I}_3$. We denote the coordinate estimate of user position as $(\hat x^w_c, \hat y^w_c, \hat z^w_c)$, and define the positioning error as $\sqrt{(x_c^w-\hat x^w_c)^2+(y_c^w-\hat y^w_c)^2+(z_c^w-\hat z^w_c)^2}$. 

Based on the transformations among WCS, CCS, ICS and PCS, we have
\be\label{equ.Det906}
\begin{aligned}
 (u_{i}^{p}-u_0 )s_{x}=-f\cdot \frac{(x_{i}^{w}-x_{c}^{w})}{(z_{i}^{w}-z_{c}^{w})},\\
 (v_{i}^{p}-v_0 )s_{y}=-f\cdot \frac{(y_{i}^{w}-y_{c}^{w})}{(z_{i}^{w}-z_{c}^{w})},
\end{aligned}
\ee
for ${\bf R}={\bf I}_3$ without pixel quantization error. Then, with pixel-domain quantization, we adopt the following for terminal positioning,
\be\label{equ.Det20}
\begin{aligned}
	\left( \textbf{Q}[u^p_i]-u_0 \right)s_{x}+f\cdot \frac{\left( x_{i}^{w}-\hat x^w_c \right)}{\left( z_{i}^{w}-\hat z^w_c \right)} = 0, \\
	\left(\textbf{Q}[v^p_i]-v_0 \right)s_{y}+f\cdot \frac{\left( y_{i}^{w}-\hat y^w_c \right)}{\left( z_{i}^{w}-\hat z^w_c \right)} = 0.
\end{aligned}
\ee

Given $z^w_c$, we can obtain the estimates of $\hat x^w_c$ and $\hat y^w_c$ by averaging the solutions of Equation~(\ref{equ.Det20}) for the coordinates $\{(x_i^w, y_i^w, z_i^w)\}$ of the captured LEDs, i.e.,
\be\label{equ.Det22}
\begin{aligned}
\hat x^w_c=\frac{\underset{i}{\boldsymbol{\Sigma}}\left[ x_{i}^{w}-(\textbf{Q}[u^p_i]-u_0) \frac{s_x\left( z_{i}^{w}- z^w_c \right)}{f} \right]}{n},\\ 
\hat y^w_c=\frac{\underset{i}{\boldsymbol{\Sigma}}\left[ y_{i}^{w}-(\textbf{Q}[v^p_i]-v_0) \frac{s_y\left(z_{i}^{w}-z^w_c\right)}{f} \right]}{n}.
\end{aligned}
\ee

We define the following square-based distortion
$$D(\hat z^w_c)= \frac{1}{n}\underset{i}{\boldsymbol{\Sigma}}\left( \left[x_{i}^{w}-(\textbf{Q}[u^p_i]-u_0) \frac{s_x\left( z_{i}^{w}-\hat z^w_c \right)}{f}-\hat x^w_c\right]^2 \right.$$ $$\left.+ \left[ y_{i}^{w}-(\textbf{Q}[v^p_i]-v_0) \frac{s_y\left( z_{i}^{w}-\hat z^w_c\right)}{f}-\hat y^w_c\right]^2 \right).$$
Height $\hat z^w_c$ can be estimated as
\be\label{equ.Det907}
\hat z^w_c=\ \arg \underset{z^w_c}\min \ D(z^w_c). 
\ee 
The above optimization problem can be solved via exhaustive search. Based on the estimated $\hat z^w_c$, we can obtain the estimates of $\hat x^w_c$ and $\hat y^w_c$ based on Equation~(\ref{equ.Det22}). 

We investigate the simulated positioning error under different pixel sizes. Specifically, we let $s_x = s_y=1.675\times 10^{-3}$ under square cell layout constraint~\cite{8644462,8519633}, take the values of pixel size from $0.5s_x$ to $5s_x$ with $50$ values in equal steps. The simulated positioning error versus the number of captured LEDs at different receiver heights is shown in Figure~\ref {5hmeancf}(a). Figure~\ref {5hmeancf}(b) shows the mean simulated positioning error versus the number of captured LEDs.
\begin{figure}[htbp]
	\centering
	\subfigure[Simulated positioning error versus the number of captured LEDs.]{
		\includegraphics[width=7cm, height=5cm]{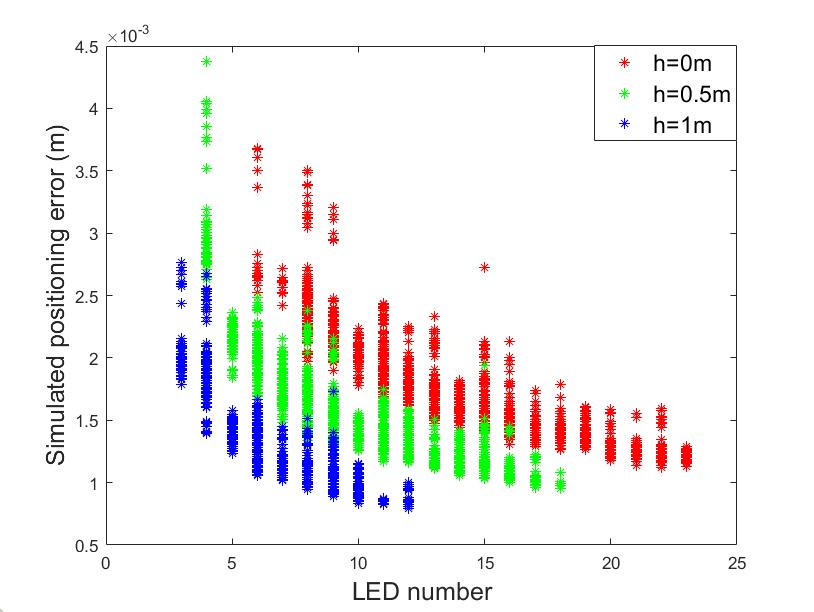}
	}
	\subfigure[Mean simulated positioning error versus the number of captured LEDs.]{
		\includegraphics[width=7cm, height=5cm]{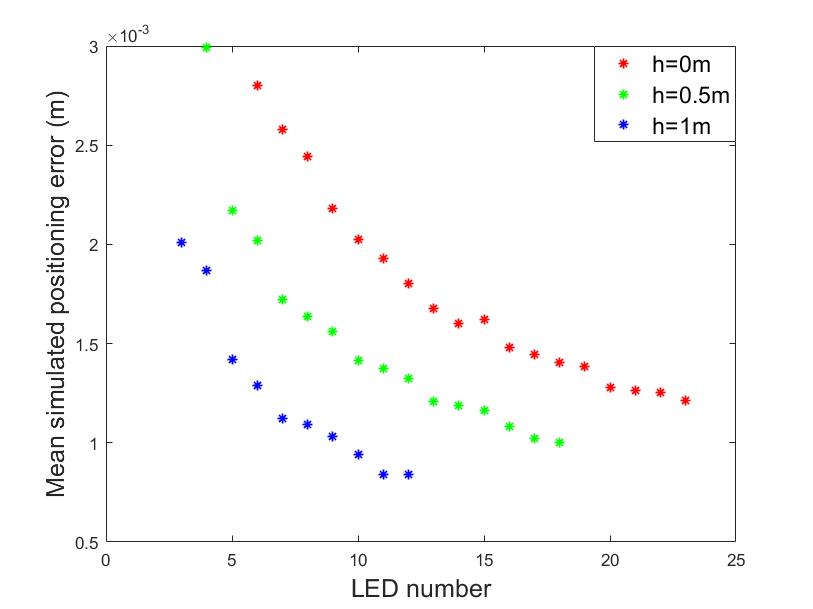}
	}
	\caption{The relationship between the simulated positioning error and the number of captured LEDs at different receiver heights for square cell layout.}\label{5hmeancf}
\end{figure}

Table~\ref{tab.Table7} shows the fitting parameters and the approximation accuracy between the mean simulated positioning error and the number of captured LEDs at receiver heights of $0$\ m, $0.5$\ m and $1$\ m, respectively. 
\begin{table*}[htbp]
	\centering
	\caption{The fitting results of mean simulated positioning error with respect to the number of captured LEDs at different heights $h$}\label{tab.Table7}
	\begin{tabular}{|c|c|c|c|c|}
		\hline
		Receiver Height (m) & Fitting Curve & Coefficient & R-square & RMSE\\
		\hline
		$h$=0 & \multirow{3}{*}{$y=kx^{-1}+c$} & $k$=0.01353, $c$=0.0006536 & 0.9914 & 4.65e-05 \\
		\cline{1-1} \cline{3-5}
		$h$=0.5 & & $k$=0.009273, $c$=0.000498 & 0.9792 & 8.016e-05  \\
		\cline{1-1} \cline{3-5}
		$h$=1  &  & $k$=0.005026, $c$=0.00044 & 0.9688 & 7.677e-05 \\ 
		\hline	
	\end{tabular}
\end{table*}

It can be seen that the mean simulated positioning error and the number of captured LEDs can be approximated by relationship $y=kx^{-1}+c$, which verifies the fitting results of NPEM in Sec.~\ref{222} except a constant term. Such constant can be attributed to the approach in Equations~(\ref{equ.Det22}) and (\ref{equ.Det907}) and numerical approach in the positioning simulation, but can validate the objective of minimizing the NPEM. Coefficient $c$ can be justified by the non-perfect solution of the proposed positioning apporach, but the inverse relationship between the NPEM and the number of captured LEDs can still be verified. Moreover, fitting coefficient $k$ decreases as receiver height $h$ increases, i.e., shorter transceiver distance leads to lower simulated positioning error under the same number of captured LEDs. 

\subsection{The NPEM in Infinite Space for Three Cell Layouts}
We further explore the NPEM in infinite space under both cases where the transmitter plane and receiver plane are parallel and non-parallel, i.e., $\{\theta_x, \theta_y =0, \theta_z \in [0,2\pi] \}$ and $\{ \theta_x,\theta_y\ \in (-\frac{\pi}{2},\ \frac{\pi}{2}) ,\ \theta_z \in [0,2\pi)\}$, respectively. 

Figure~\ref{infinite 3 layouts} shows square, hexagonal and triangular cell layouts with density one LED per square meter in infinite space. According to the illustration of captured LEDs in Figure~\ref{lednumbercaptured}, the LEDs enclosed by red circle and green ellipse include the captured LED under parallel and non-parallel transmitter plane and receiver plane, respectively. 
\begin{figure*}[htbp]
	\centering
	\includegraphics[width=13cm, height=3.6cm]{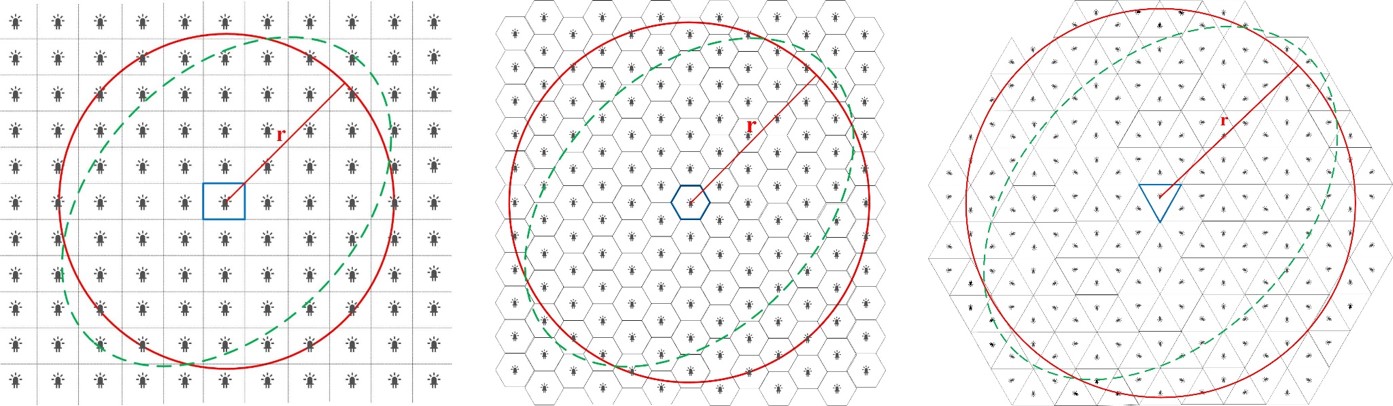}
	\caption{Infinite space of square, hexagonal and triangular cell layouts under parallel and non-parallel transmitter plane and receiver plane (Red circle for parallel case and green ellipse for non-parallel case).}\label{infinite 3 layouts}
\end{figure*}

\subsubsection{Parallel transmitter plane and receiver plane}
For parallel transmitter plane and receiver plane, we explore the relationship between NPEM and the number of captured LEDs at receiver locations $x_{c}^w\in(0:0.01:0.5),\ y_{c}^w\in(0:0.01:0.5)$ and $z_{c}^w=h$ for different receiver heights. Figure~\ref{infinite, para 3layouts} shows the NPEM versus the number of captured LEDs at different receiver heights under the three cell layouts. The fitting results for the three cell layouts are shown in Table~\ref{tab.Table8}.  
\begin{figure*}[htbp]
	\centering
	\subfigure[Square cell layout.]{
		\includegraphics[width=5.5cm, height=4cm]{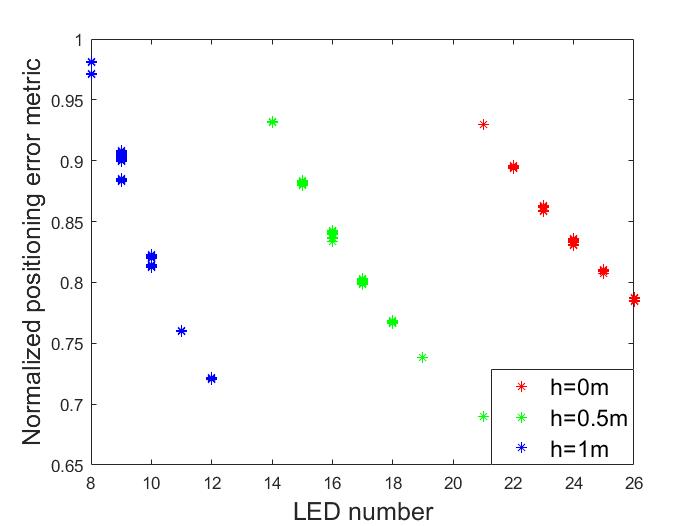}
	}
    \hspace{-2ex}
	\subfigure[Hexagonal cell layout.]{
		\includegraphics[width=5.5cm, height=4cm]{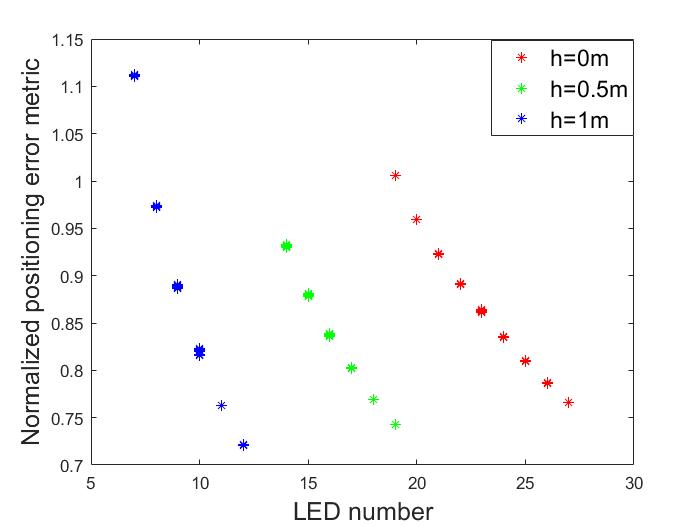}
	}
	\hspace{-2ex}
	\subfigure[Triangular cell layout.]{
		\includegraphics[width=5.5cm, height=4cm]{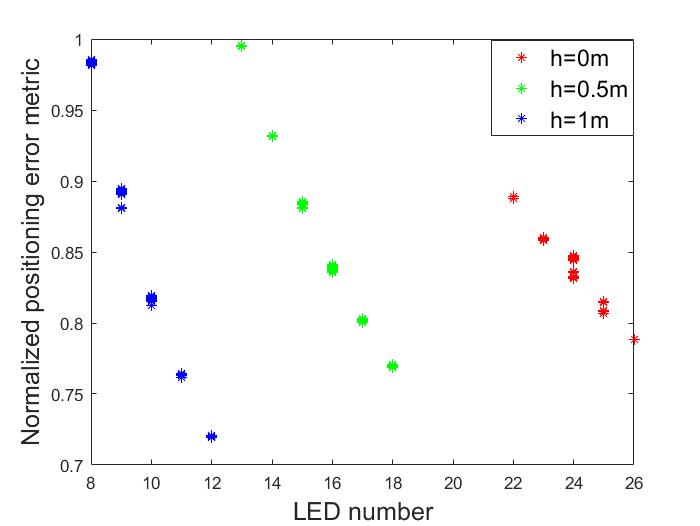}
	}
	\caption{The NPEM versus the number of captured LEDs for infinite space at different receiver heights.}\label{infinite, para 3layouts}
\end{figure*}
\begin{table*}[htbp]
	\centering
	\caption{The fitting results of NPEM with respect to the number of captured LEDs in infinite space at different receiver heights $h$ under parallel transmitter plane and receiver plane}\label{tab.Table8}
	\begin{tabular}{|c|c|c|c|c|c|}
		\hline
		Receiver Height (m) & Fitting Curve & Layouts & Coefficient & R-square & RMSE \\
		\hline
		\multirow{3}{*}{$h$=0} & \multirow{9}{*}{$y=kx^{-1}+c$} & Square & $k$=15.63, $c$=0.1834 & 0.9986 & 0.001251 \\
		\cline{3-6}
		&  & Hexagon & $k$=15.36, $c$=0.1947 & 0.9994 & 0.000931 \\
		\cline{3-6}
		&  & Triangle & $k$=14.35, $c$=0.2366 & 0.9991 & 0.001397	 \\
		\cline{1-1}\cline{3-6}
		\multirow{3}{*}{$h$=0.5} &  & Square & $k$=10.35, $c$=0.1923 & 0.9983 & 0.001598 \\
		\cline{3-6}
		&  & Hexagon & $k$=10.02, $c$=0.2118 & 0.999 & 0.001531 \\
		\cline{3-6}
		&  & Triangle & $k$=10.47, $c$=0.1849 & 0.9988 & 0.001507 \\
		\cline{1-1}\cline{3-6}
		\multirow{3}{*}{$h$=1} & & Square & $k$=6.295, $c$=0.1919 & 0.983 & 0.008044 \\
		\cline{3-6}
		&  & Hexagon &  $k$=6.441, $c$=0.1751 & 0.9957 & 0.005172   \\
		\cline{3-6}
		&  & Triangle & $k$=6.414, $c$=0.1789 & 0.9986 & 0.002877	 \\
		\cline{1-1}\cline{3-6}
		\hline	
	\end{tabular}
\end{table*}

Similar to the results in $5$m$\times5$m$\times3$m finite space, the NPEM is approximately in inverse proportion to the number of captured LEDs at different receiver heights. Coefficient $k$ at different $h$ decreases as the distance between the transmitter plane and the receiver plane decreases. 
\subsubsection{Non-parallel transmitter plane and receiver plane}
We investigate the NPEM with different rotation angles under non-parallel transmitter plane and receiver plane with $\{ \theta_x,\theta_y\ \in (-\frac{\pi}{2},\ \frac{\pi}{2}) ,\ \theta_z \in [0,2\pi)\}$, assuming that the receiver point locates at $x_{c}^w\in(0:0.2:0.5),\ y_{c}^w\in(0:0.2:0.5)$ and $z_{c}^w=h$ under different receiver heights.

We can obtain the minimum NPEM at each receiver point through traversing $\theta_x,\theta_y\ \in (-\frac{\pi}{2},\ \frac{\pi}{2})$ with step size of $\pi/9$ and $\theta_z \in [0,2\pi)$ with step size of $\pi/2$. There is no explicit quantitative fitting relationship between the NPEM and the number of captured LEDs, since the optimized rotation angle critically depends on the receiver position. 

\section{Indoor dense LED layout optimization}\label{sec.5}
\subsection{Optimization Problem Formulation}
Based on the fitting relationship between NPEM and the number of captured LEDs under different cell layouts, we investigate the LED cell layout optimization under parallel transmitter plane and receiver plane.
\begin{figure}[H]
	\centering
	\includegraphics[width=4cm, height=4.5cm]{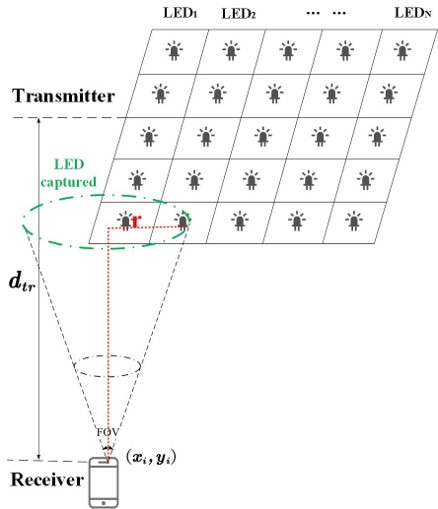}
	\caption{The LEDs captured by camera with parallel transmitter plane and receiver plane.}\label{LED layout optimal}
\end{figure}

We optimize the LED cell layout based on the NPEM of $I$ points. Let $n$ be the number of transmitter LEDs, and $n_{c}^{i}$ be the number of captured LEDs at receiver $(x_i,y_i)$, for $i=1,2,...,I$, which can be obtained according to the illustration of captured LEDs in Sec.~\ref{3D error perf}. Figure~\ref{LED layout optimal} shows the coverage radius $r$, given by $r=d_{tr}\cdot \tan (\frac{FOV}{2})$, where $d_{tr}$ is the distance between the transmitter and the receiver, and $n_{c}^{i}$ is the number of LEDs inside the circle with center $(x_i,y_i)$. We optimize the LED cell layout to minimize the mean NPEM over all the $I$ points, via optimizing the transmitter LED coordinates $(x_j^w,y_j^w,z_j^w)$, for example with constant height $z_j^w=2.75$\ m for $j=1,2,...,n$. Moreover, since from both fitting and simulated position, the positioning error is approximately in inverse proportion to the number of captured LEDs, we convert minimizing NPEM into minimizing the mean of $1/n_{c}^{i}$, i.e.,
\be\label{equ.Det19}
\underset{\left\{ \left( x_{j}^{w},\ y_{j}^{w} \right) \right\}_{j=1}^{n}}\min \  \frac{1}{I}\sum_{i=1}^I{\frac{1}{n_{c}^{i}}}.
\ee 

\subsection{Rectangular Cell Layout Optimization}
We investigate the NPEM under non-uniform $M\times N$ rectangular cell layout considering common indoor room, as shown in Figure~\ref{non-uniform M*N layout}. Such layout yields rectangular pattern but the spacings between adjacent rows and columns can be designed. The row and column spacings, denoted as $d_{r,i}\ ( i=1,2,3,...,M+1)$ and $d_{c,j}\ (j=1,2,3,...,N+1)$, respectively, are symmetric, i.e., row spacing $d_{r,i}=d_{r,M+2-i}$ and column spacing $d_{c,j} = d_{c,N+2-j}$.
\begin{figure}[htbp]
	\centering
	\includegraphics[width=5cm, height=4.3cm]{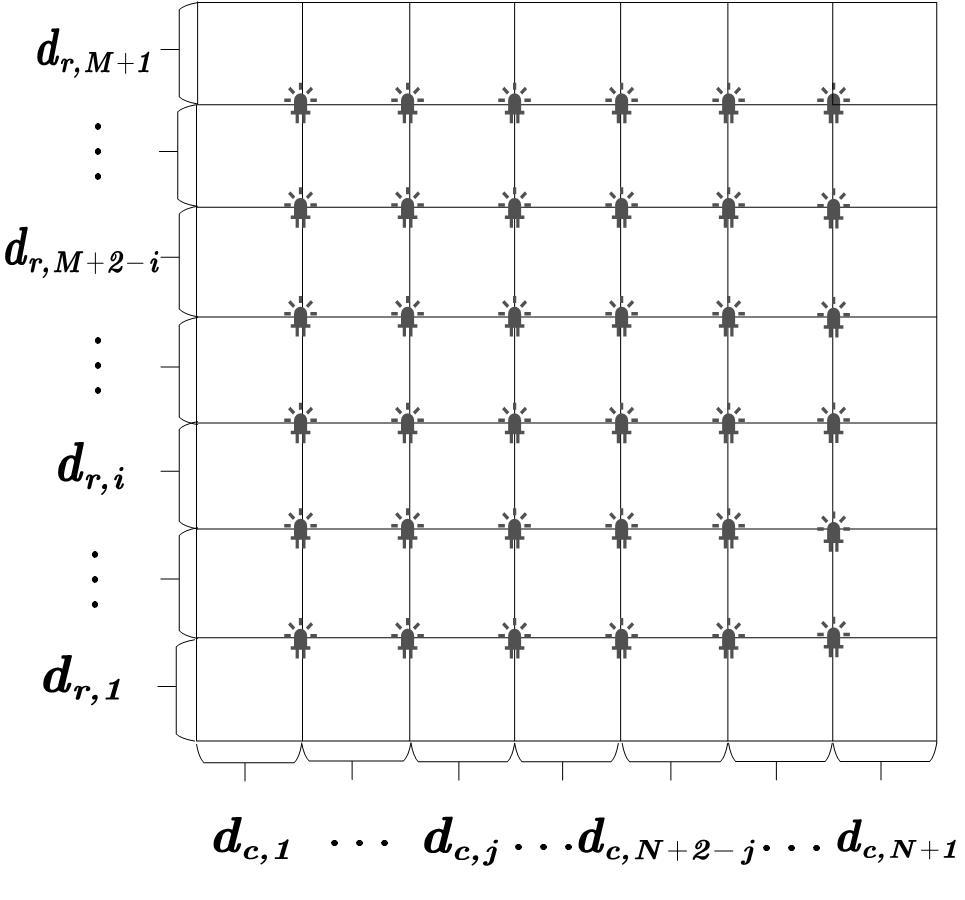}
	\caption{Non-uniform $M\times N$ rectangular cell layout.}\label{non-uniform M*N layout}
\end{figure}

Firstly, we consider the NPEM of uniform $5\times 5$ square cell layout as an example, i.e., $M=N=5, d_{r,i}=d_{c,j}$, spacing $d_{r,i}=d_{c,j}=0.5$\ m, $i,j=1$ and $d_{r,i}=d_{c,j}=1$\ m, $i,j=2,3$, as shown in Figure~\ref{uniform square}(a). The NPEM is shown in Figure~\ref{uniform square}(b), where $x$- and $y$- axis represent the receiver points. The mean NPEM is $0.5831$ for all the receiver $( x_i,y_i) \in \{ [ 0:1:5] ,\ [ 0:1:5]\}$. 
\begin{figure}[htbp]
	\centering
	\subfigure[Uniform square cell layout.]	{
		\includegraphics[width=6cm, height=5cm]{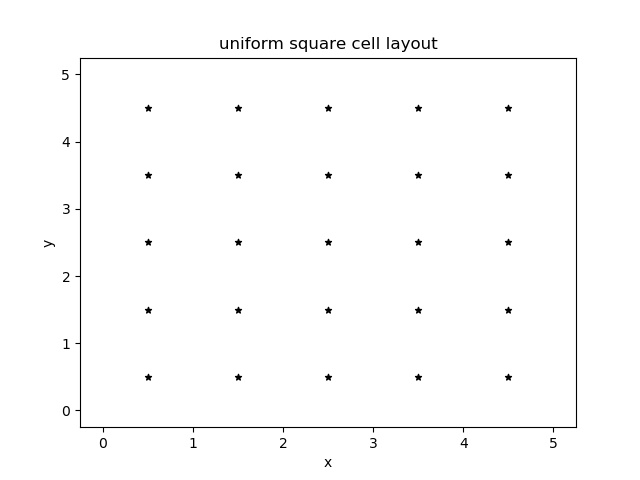}
	}
	\subfigure[NPEM of uniform square cell layout.]{
		\includegraphics[width=6cm, height=5cm]{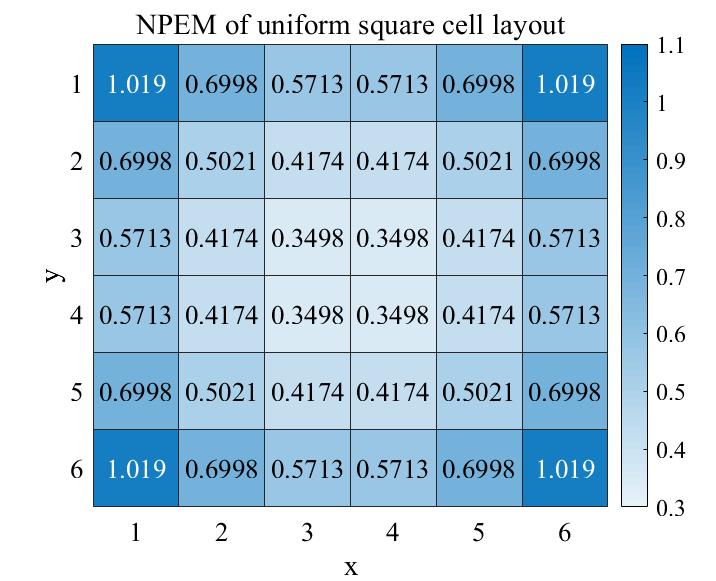}
	}
	\caption{NPEM under uniform square cell layout.}\label{uniform square}
\end{figure}

Then, we optimize the NPEM under non-uniform $5\times 5$ rectangular cell layout, i.e., $M=N=5$. The row spacing $d_{r,i}$ and column spacing $d_{c,j}$ are symmetric, respectively, as shown in Figure~\ref{non-uniform square layout}(a). Figure~\ref{non-uniform square layout}(b) shows the optimized non-uniform rectangular cell layout with the mean NPEM of $0.5276$, which is lower than that of the uniform square cell layout in Figure~\ref{uniform square}. It is seen that the optimized non-uniform rectangular cell layout shows lower mean NPEM in VLP. While uniform square cell layout is generally preferred in VLC and commonly adopted in real indoor lighting, such result raises a challenge to find a balance between the communication and positioning for the joint design of VLC and VLP. 
\begin{figure}[htbp]
	\centering
	\subfigure[Non-uniform $5\times 5$ rectangular cell layout.]{
		\includegraphics[width=6cm, height=5cm]{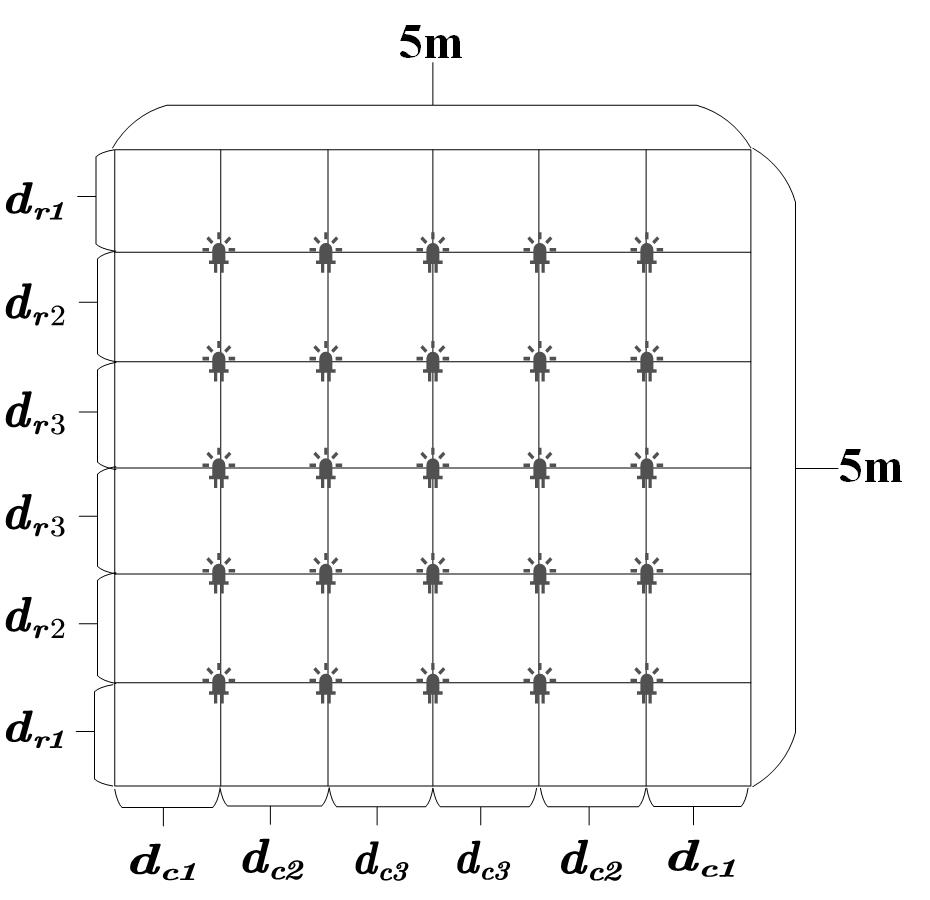}
	}
	\subfigure[Optimized non-uniform $5\times 5$ rectangular cell layout.]{
		\includegraphics[width=5.6cm, height=5cm]{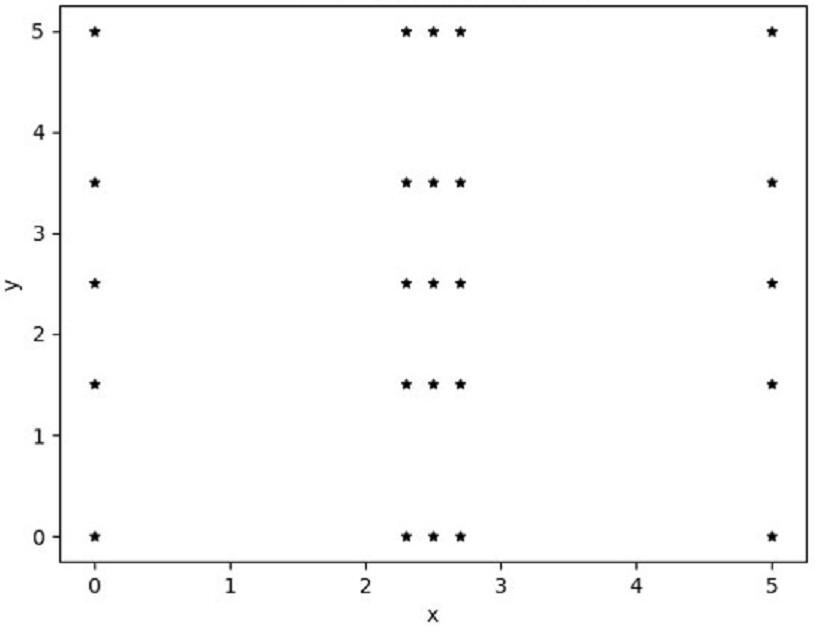}
	}
	\caption{Optimized non-uniform rectangular cell layout.}\label{non-uniform square layout}
\end{figure}

Meanwhile, we solve Problem~(\ref{equ.Det19}) through genetic algorithm (GA), which consists of coding, fitness function, and genetic operators (selection, crossover and variation). We set selection rate to be $0.5$, crossover rate to be $0.7$, variation rate to be $0.001$, while the number of iteration depends on initial population size. In addition, to accelerate the iterative convergence rate and guarantee the convergence to a high fitness value, we consider selecting individuals with high fitness to the following generation. Under selection rate of $0.5$, we select the individuals with the highest half of fitness as the parents in the next generation. Figure~\ref{5060,40} shows the convergence of fitness and corresponding layout for initial population of $5071$ and $10$ iterations.
\begin{figure}[htbp]
	\centering
	\subfigure[The fitness convergence.]{
		\includegraphics[width=7.5cm, height=5cm]{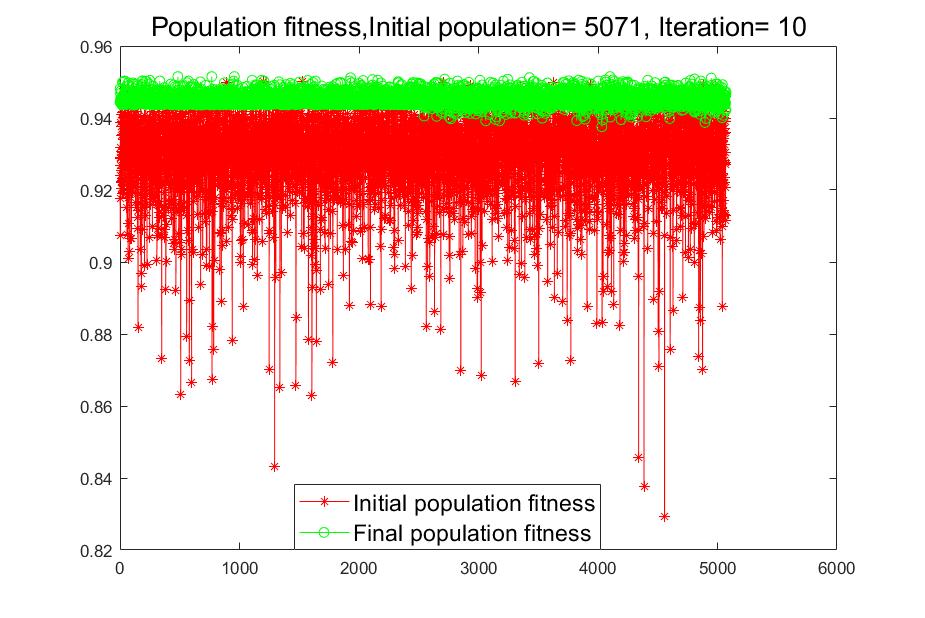}
	}
	\subfigure[Corresponding LED layout.]{
		\includegraphics[width=5.8cm, height=4.6cm]{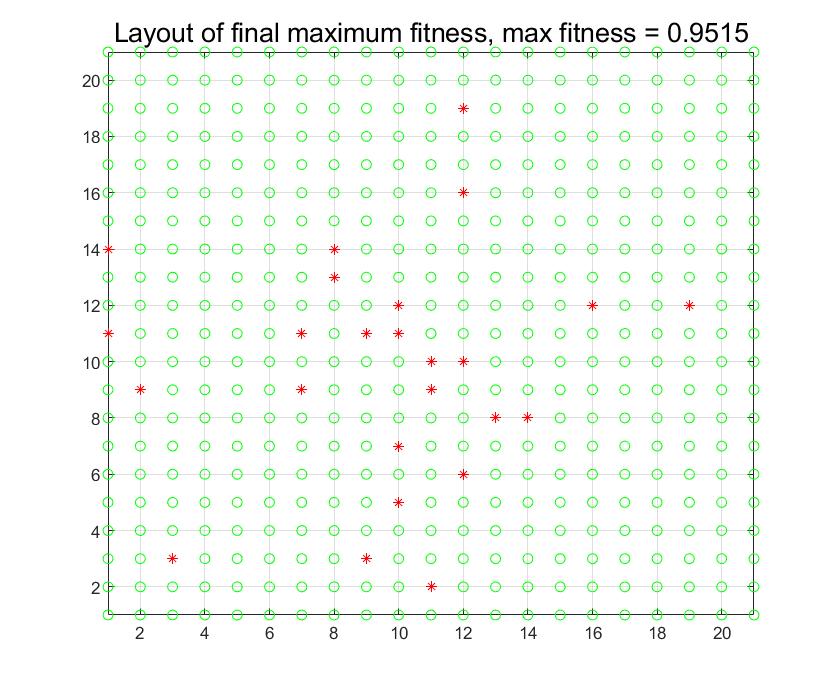}
	}
	\caption{Initial population=$5071$, iterations=$10$.}\label{5060,40}
\end{figure}

Similarly, we can obtain the mean NPEM of $0.5777$ under the optimized layout of GA for the receiver $( x_i,y_i) \in \{ [ 0:1:5] ,\ [ 0:1:5]\}$, which is higher than that obtained under non-uniform rectangular cell layout. It is seen that the GA shows higher mean NPEM compared with the proposed optimized non-uniform rectangular cell layout, due to high dimensions of optimization.

\section{Conclusion} \label{sec.Conclusions}
We consider a 3D VLP based on smartphone camera in the indoor scenario, analyze the 3D positioning NPEM, derive the NPEM expression through the partial derivative of the position relationship, and evaluate the numerical results of the NPEM under horizontal and non-horizontal receiver planes. Moreover, we approximate the relationship between the NPEM and the number of captured LEDs for parallel transmitter plane and receiver plane, and explore the NPEM in infinite LED cell layout. Finally, we optimize the LED transmitter cell layout to minimize the NPEM, and provide parameter optimization under square cell layout. Note that the above NPEM analysis and layout optimization are based on the ideal smartphone camera imaging with negligible distortion, while the positioning with nonlinear distortion need to be further explored in future work.


\bibliographystyle{ieeetr}
\bibliography{./xu_1224}
	
\end{document}